\setlist[enumerate,1]{label=(\roman*)}
\newtheorem{claim}{}[section]
\newtheorem{theorem}[claim]{Theorem}
\newtheorem{conjecture}[claim]{Conjecture}
\newtheorem{lemma}[claim]{Lemma}
\newtheorem{proposition}[claim]{Proposition}
\renewenvironment{proof}{\noindent{\it Proof. \hskip0pt}}
                     {\hfill$\square$\par\medskip}
\begin{document}
%\baselineskip 6.0 truemm
%\parindent 1.5 true pc

\def\red{\textcolor{red}}

\newcommand\lan{\langle}
\newcommand\ran{\rangle}
\newcommand\tr{{\text{\rm Tr}}\,}
\newcommand\ot{\otimes}
\newcommand\join{\vee}
\newcommand\meet{\wedge}
\renewcommand\ker{{\text{\rm Ker}}\,}
\newcommand\image{{\text{\rm Im}}\,}
\newcommand\id{{\text{\rm id}}}
\newcommand\tp{{\text{\rm tp}}}
\newcommand\pr{\prime}
\newcommand\e{\epsilon}
\newcommand\la{\lambda}
\newcommand\inte{{\text{\rm int}}\,}
\newcommand\ttt{{\text{\rm t}}}
\newcommand\spa{{\text{\rm span}}\,}
\newcommand\conv{{\text{\rm conv}}\,}
\newcommand\rank{\ {\text{\rm rank of}}\ }
\newcommand\re{{\text{\rm Re}}\,}
\newcommand\ppt{\mathbb T}
\newcommand\rk{{\text{\rm rank}}\,}
\newcommand\SN{{\text{\rm SN}}\,}
\newcommand\SR{{\text{\rm SR}}\,}
\newcommand\HA{{\mathcal H}_A}
\newcommand\HB{{\mathcal H}_B}
\newcommand\HC{{\mathcal H}_C}
\newcommand\CI{{\mathcal I}}
\newcommand{\bra}[1]{\langle{#1}|}
\newcommand{\ket}[1]{|{#1}\rangle}
\newcommand\cl{\mathcal}
\newcommand\idd{{\text{\rm id}}}
\newcommand\OMAX{{\text{\rm OMAX}}}
\newcommand\OMIN{{\text{\rm OMIN}}}
\newcommand\diag{{\text{\rm Diag}}\,}
\newcommand\calI{{\mathcal I}}
\newcommand\bfi{{\bf i}}
\newcommand\bfj{{\bf j}}
\newcommand\bfk{{\bf k}}
\newcommand\bfl{{\bf l}}
\newcommand\bfp{{\bf p}}
\newcommand\bfq{{\bf q}}
\newcommand\bfzero{{\bf 0}}
\newcommand\bfone{{\bf 1}}
\newcommand\im{{\mathcal R}}
\newcommand\ha{{\frac 12}}
\newcommand\xx{{\text{\sf X}}}
\newcommand\sa{{\text{\rm sa}}}
\newcommand\Pz{P^{\rm z}}
\newcommand\Pn{P^{\rm n}}
\newcommand\textfrac{\textstyle\frac}
\newcommand\semi{}

\title{Entangled edge states of corank one with positive partial transposes}

%\author{Jinwon Choi}
%\address{Department of Mathematics and Research Institute of Natural Sciences, Sookmyung Women's University, Seoul 04310, Korea}
%\email{jwchoi@sookmyung.ac.kr}
%
%\author{Young-Hoon Kiem}
%\email{kiem@snu.ac.kr}
%\address{Department of Mathematics and Research Institute of Mathematics, Seoul National University, Seoul 08826, Korea}
%
%
%\author{Seung-Hyeok Kye}
%\email{kye@snu.ac.kr}
%\address{Department of Mathematics and Research Institute of Mathematics, Seoul National University, Seoul 08826, Korea}

\author{Jinwon Choi}
\address{Department of Mathematics and Research Institute of Natural Sciences, Sookmyung Women's University, Seoul 04310, Korea}
\email{jwchoi@sookmyung.ac.kr}

\author{Young-Hoon Kiem}
\address{Department of Mathematics and Research Institute of Mathematics, Seoul National University, Seoul 08826, Korea}
\email{kiem@snu.ac.kr}

\author{Seung-Hyeok Kye}
\address{Department of Mathematics and Research Institute of Mathematics, Seoul National University, Seoul 08826, Korea}
\email{kye@snu.ac.kr}

\date{\today}

\begin{abstract}
We construct a parameterized family of $n\otimes n$ PPT (positive partial transpose) states of corank one for each $n\ge 3$.
With a suitable choice of parameters, we show that they are $n\otimes n$ PPT
entangled edge states of corank one for $3\le n\le 1000$.
They violate the range criterion for separability in the most extreme
way. Note that corank one is the smallest possible corank for such
states. The corank of the partial transpose is given by $2n-3$,
which is also the smallest possible corank for the partial
transposes of PPT entangled edge states of corank one. They provide
the first explicit examples of such states for $n\ge 4$.
\end{abstract}

\maketitle
%%%%%%%%%%%%%%%%%%%%%%%%%%%%%%%%%%%%%%%%%%%%%%%%%%%%%%%%%%%%%%%%%%%%%%%%%%%%%%%%%%%%%%%%%%%%%%%%%%%%%%%%%%%%%%%%%%%%%%%%%%%%%%%%%%%%%%%%
%%%%%%%%%%%%%%%%%%%%%%%%%%%%%%%%%%%%%%%%%%%%%%%%%%%%%%%%%%%%%%%%%%%%%%%%%%%%%%%%%%%%%%%%%%%%%%%%%%%%%%%%%%%%%%%%%%%%%%%%%%%%%%%%%%%%%%%%
%%%%%%%%%%%%%%%%%%%%%%%%%%%%%%%%%%%%%%%%%%%%%%%%%%%%%%%%%%%%%%%%%%%%%%%%%%%%%%%%%%%%%%%%%%%%%%%%%%%%%%%%%%%%%%%%%%%%%%%%%%%%%%%%%%%%%%%%
%%%%%%%%%%%%%%%%%%%%%%%%%%%%%%%%%%%%%%%%%%%%%%%%%%%%%%%%%%%%%%%%%%%%%%%%%%%%%%%%%%%%%%%%%%%%%%%%%%%%%%%%%%%%%%%%%%%%%%%%%%%%%%%%%%%%%%%%
%%%%%%%%%%%%%%%%%%%%%%%%%%%%%%%%%%%%%%%%%%%%%%%%%%%%%%%%%%%%%%%%%%%%%%%%%%%%%%%%%%%%%%%%%%%%%%%%%%%%%%%%%%%%%%%%%%%%%%%%%%%%%%%%%%%%%%%%
%%%%%%%%%%%%%%%%%%%%%%%%%%%%%%%%%%%%%%%%%%%%%%%%%%%%%%%%%%%%%%%%%%%%%%%%%%%%%%%%%%%%%%%%%%%%%%%%%%%%%%%%%%%%%%%%%%%%%%%%%%%%%%%%%%%%%%%%
%%%%%%%%%%%%%%%%%%%%%%%%%%%%%%%%%%%%%%%%%%%%%%%%%%%%%%%%%%%%%%%%%%%%%%%%%%%%%%%%%%%%%%%%%%%%%%%%%%%%%%%%%%%%%%%%%%%%%%%%%%%%%%%%%%%%%%%%
%%%%%%%%%%%%%%%%%%%%%%%%%%%%%%%%%%%%%%%%%%%%%%%%%%%%%%%%%%%%%%%%%%%%%%%%%%%%%%%%%%%%%%%%%%%%%%%%%%%%%%%%%%%%%%%%%%%%%%%%%%%%%%%%%%%%%%%%
%%%%%%%%%%%%%%%%%%%%%%%%%%%%%%%%%%%%%%%%%%%%%%%%%%%%%%%%%%%%%%%%%%%%%%%%%%%%%%%%%%%%%%%%%%%%%%%%%%%%%%%%%%%%%%%%%%%%%%%%%%%%%%%%%%%%%%%%
%%%%%%%%%%%%%%%%%%%%%%%%%%%%%%%%%%%%%%%%%%%%%%%%%%%%%%%%%%%%%%%%%%%%%%%%%%%%%%%%%%%%%%%%%%%%%%%%%%%%%%%%%%%%%%%%%%%%%%%%%%%%%%%%%%%%%%%%
\def\al{\alpha }
\def\bal{\bar{\alpha} }
\def\rhog{{\varrho^\Gamma} }
\def\im{\mathrm{Im}\, }
\def\bx{\bar{x} }
\def\CC{\mathbb{C} }
\def\beq{\begin{equation}}
\def\eeq{\end{equation}}

%%%%%%%%%%%%%%%%%%%%%%%%%%%%%%%%%%%%%%%%%%%%%%%%%%%%%%%%%%%%%%%%%%%%%%%%%%%%%%%%%%%%%%%%%%%%%%%%%%%%%%%%%%%%%%%%%%%%%%%%%%%%%%%%%%%%%%%%
%%%%%%%%%%%%%%%%%%%%%%%%%%%%%%%%%%%%%%%%%%%%%%%%%%%%%%%%%%%%%%%%%%%%%%%%%%%%%%%%%%%%%%%%%%%%%%%%%%%%%%%%%%%%%%%%%%%%%%%%%%%%%%%%%%%%%%%%
%%%%%%%%%%%%%%%%%%%%%%%%%%%%%%%%%%%%%%%%%%%%%%%%%%%%%%%%%%%%%%%%%%%%%%%%%%%%%%%%%%%%%%%%%%%%%%%%%%%%%%%%%%%%%%%%%%%%%%%%%%%%%%%%%%%%%%%%
%%%%%%%%%%%%%%%%%%%%%%%%%%%%%%%%%%%%%%%%%%%%%%%%%%%%%%%%%%%%%%%%%%%%%%%%%%%%%%%%%%%%%%%%%%%%%%%%%%%%%%%%%%%%%%%%%%%%%%%%%%%%%%%%%%%%%%%%
%%%%%%%%%%%%%%%%%%%%%%%%%%%%%%%%%%%%%%%%%%%%%%%%%%%%%%%%%%%%%%%%%%%%%%%%%%%%%%%%%%%%%%%%%%%%%%%%%%%%%%%%%%%%%%%%%%%%%%%%%%%%%%%%%%%%%%%%
%%%%%%%%%%%%%%%%%%%%%%%%%%%%%%%%%%%%%%%%%%%%%%%%%%%%%%%%%%%%%%%%%%%%%%%%%%%%%%%%%%%%%%%%%%%%%%%%%%%%%%%%%%%%%%%%%%%%%%%%%%%%%%%%%%%%%%%%
%%%%%%%%%%%%%%%%%%%%%%%%%%%%%%%%%%%%%%%%%%%%%%%%%%%%%%%%%%%%%%%%%%%%%%%%%%%%%%%%%%%%%%%%%%%%%%%%%%%%%%%%%%%%%%%%%%%%%%%%%%%%%%%%%%%%%%%%
\section{Introduction}

In the current quantum information and computation theory, the notion of entanglement
is considered as one of the most important resources. Nevertheless, distinguishing
entanglement from separability is very difficult, and known to be NP-hard in general \cite{{StrongNP},{NPHard03}}.
Among various separability criteria, the PPT (positive partial transpose) criterion \cite{choi-ppt, peres} is very simple to test but powerful:
The partial transpose of a separable state must be positive (semi-definite).
Positivity of the partial transpose is actually sufficient for separability
in the $2\ot 2$ and $2\ot 3$ systems \cite{horo-1,stormer,woronowicz}, but this is not the case
in general. Examples of PPT entanglement go back to the seventies and early eighties:
See \cite{woronowicz}  for $2\otimes 4$ case and  \cite{choi-ppt,{stormer82}} for $3\otimes 3$ case.
The notion of PPT is also very important in itself in quantum information theory.
See \cite{{halder},{horo-distill},{huber}} for examples.

A state in the tensor product $M_m\ot M_n$ of matrix
algebras is called \emph{separable} if it is a convex combination of pure
product states, which are rank one projections onto product vectors
of the form $|\xi\ran\ot |\eta\ran$ in $\mathbb C^m\ot\mathbb C^n$.
Non-separable states are called \emph{entangled}. The \emph{partial transpose}
$(x\ot y)^\Gamma$ of $x\ot y\in M_m\ot M_n$ is given by $x^\ttt\ot y$ with the usual
transpose $x^\ttt$. If we identify $M_m\ot M_n$ with the block
matrices $M_m(M_n)$ then the partial transpose corresponds to the
block-wise transpose. Recall that the transpose of the rank one
projection $|\xi\ran\lan\xi|$ onto $|\xi\ran$ is again a rank one
projection onto its conjugate vector $|\bar\xi\ran$. Therefore, if a
PPT state $\varrho$ is separable then there must exist a family
$\{|\xi_i\ran\ot |\eta_i\ran\}$ of product vectors such that the
ranges of $\varrho$ and $\varrho^\Gamma$ are spanned by
$\{|\xi_i\ran\ot |\eta_i\ran\}$ and $\{|\bar\xi_i\ran\ot
|\eta_i\ran\}$, respectively. This is the range criterion for
separability \cite{p-horo-range} which is useful to detect
entanglement among PPT states.

Some PPT entangled states violate the range criterion in an extreme way: There exists no
nonzero product vector $|\xi_i\ran\ot |\eta_i\ran\in \image\varrho$ such that
$|\bar\xi_i\ran\ot |\eta_i\ran\in \image\varrho^\Gamma$. Such states are called PPT entangled edge states \cite{lkhc},
in short, \emph{edge states} in this paper. Edge states are very important to understand the convex set consisting of PPT states,
because every PPT state is a convex combination of pure product states and edge states.
It is clear that there must be some restrictions on the ranges of edge states and their partial transposes, and so
it is natural to classify edge states by the {\sl bi-ranks}, $(p,q)$, combinations of ranks
$p$ and $q$ of themselves and their partial transposes, respectively.
The first step for classification is to solve related equations to get necessary conditions
for possible bi-ranks. The next step is, of course, to construct PPT entangled edge states with prescribed possible bi-ranks.
The first step has been considered in \cite{kye-prod-vec} with techniques from algebraic geometry. See also
\cite{kiem-multi} for multi-partite cases.

In the $3\ot 3$ system, the classification of edge states by bi-ranks is now complete
by constructing \cite{kye-osaka} $3\ot 3$ edge states of bi-rank $(8,6)$, together with others
\cite{{bdmsst},{choi-ppt},{clarisse},{dmsst},{ha-3},{ha-kye-2},{stormer82}}.
Note that the sum $8+6=14$ is the maximum among $p+q$ of possible bi-rank $(p,q)$ for $3\ot 3$ edge states.
It is still an open question to classify $2\ot 4$ edge states by bi-ranks. See \cite{kye-ritsu} for a survey in this direction.
In the general $n\ot n$ system, possible maximum bi-rank for edge states is known to be $(n^2-1, n^2-2n+3)$ \cite{kye-prod-vec}.
The purpose of this paper is to construct such PPT entangled edge states. We construct PPT states
with bi-rank $(n^2-1, n^2-2n+3)$ for arbitrary $n=3,4,\dots$, and confirm that some of them are
PPT entangled edge states up to $n\le 1000$. In fact, we conjecture that our PPT states are
entangled edge states for all $n\ge 3$ with a suitable choice of parameters.

We explain the background why $(n^2-1,n^2-2n+3)$ is a maximum
bi-rank for possible edge states and provide positive \semi matrices
for our construction in the next section, and then we solve a system
of bilinear equations that will determine the ranges of our states
in Section \ref{bilinear-equation}. We warm up by constructing $3\ot
3$ and $4\ot 4$ edge states in Sections \ref{3x3} and \ref{4x4},
respectively, and present general construction in Section \ref{nxn}. In the final section, we
discuss alternative constructions for $4\otimes 4$ PPT entangled
edge states of corank one.

%%%%%%%%%%%%%%%%%%%%%%%%%%%%%%%%%%%%%%%%%%%%%%%%%%%%%%%%%%%%%%%%%%%%%%%%%%%%%%%%%%%%%%%%%%%%%%%%%%%%%%%%%%%%%%%%%%%%%%%%%%%%%%%%%%%%%%%%
%%%%%%%%%%%%%%%%%%%%%%%%%%%%%%%%%%%%%%%%%%%%%%%%%%%%%%%%%%%%%%%%%%%%%%%%%%%%%%%%%%%%%%%%%%%%%%%%%%%%%%%%%%%%%%%%%%%%%%%%%%%%%%%%%%%%%%%%
%%%%%%%%%%%%%%%%%%%%%%%%%%%%%%%%%%%%%%%%%%%%%%%%%%%%%%%%%%%%%%%%%%%%%%%%%%%%%%%%%%%%%%%%%%%%%%%%%%%%%%%%%%%%%%%%%%%%%%%%%%%%%%%%%%%%%%%%
%%%%%%%%%%%%%%%%%%%%%%%%%%%%%%%%%%%%%%%%%%%%%%%%%%%%%%%%%%%%%%%%%%%%%%%%%%%%%%%%%%%%%%%%%%%%%%%%%%%%%%%%%%%%%%%%%%%%%%%%%%%%%%%%%%%%%%%%
%%%%%%%%%%%%%%%%%%%%%%%%%%%%%%%%%%%%%%%%%%%%%%%%%%%%%%%%%%%%%%%%%%%%%%%%%%%%%%%%%%%%%%%%%%%%%%%%%%%%%%%%%%%%%%%%%%%%%%%%%%%%%%%%%%%%%%%%
%%%%%%%%%%%%%%%%%%%%%%%%%%%%%%%%%%%%%%%%%%%%%%%%%%%%%%%%%%%%%%%%%%%%%%%%%%%%%%%%%%%%%%%%%%%%%%%%%%%%%%%%%%%%%%%%%%%%%%%%%%%%%%%%%%%%%%%%
%%%%%%%%%%%%%%%%%%%%%%%%%%%%%%%%%%%%%%%%%%%%%%%%%%%%%%%%%%%%%%%%%%%%%%%%%%%%%%%%%%%%%%%%%%%%%%%%%%%%%%%%%%%%%%%%%%%%%%%%%%%%%%%%%%%%%%%%
%\section{Backgrounds and Preliminaries}
\section{PPT entangled edge states with minimum coranks}

We are looking for quadruplet $(m,n,k,\ell)$ of natural numbers satisfying the following property:
\smallskip
\begin{enumerate}
\item[{\rm (A)}]
there exists an $m\otimes n$ PPT entangled edge state $\varrho$ with bi-rank $(mn-k,mn-\ell)$.
\end{enumerate}
%\smallskip
The numbers $k$ and $\ell$ are called the \emph{corank} of the matrix $\varrho$ and $\varrho^\Gamma$, respectively. If the statement (A) is true, then we may take subspaces $D=\ker\varrho$ and $E=\ker\varrho^\Gamma$ to see that the following property
%\smallskip
\begin{enumerate}
\item[{\rm (B)}]
there exists a pair $(D,E)$ of subspaces of $\mathbb C^m\ot\mathbb C^n$ with $(\dim D,\dim E)=(k,\ell)$ such that
there is no nonzero product vector $|\xi\ran\ot |\eta\ran$ satisfying
\begin{equation}\label{bas-eq}
|\xi\ran\ot |\eta\ran\in D^\perp \quad \text{and}\quad |\bar\xi\ran\ot |\eta\ran\in E^\perp
\end{equation}
\end{enumerate}
holds. Note that (\ref{bas-eq}) gives rise to a system of equations with
$m+n-2$ complex variables up to scalar multiplications. Because the total number of equations is given by
$k+\ell$, one may expect that the statement (B) implies that $k+\ell\ge m+n-2$. It was actually shown in \cite{kye-prod-vec}
that the statement (B) implies the following:
\begin{enumerate}
\item[{\rm (C)}]
$k+\ell> m+n-2$ or the following relation
\begin{equation}\label{kra}
k+\ell=m+n-2,\qquad \sum_{r+s=m-1}(-1)^r \binom kr\binom \ell s = 0
\end{equation}
holds.
\end{enumerate}

The Diophantine equation (\ref{kra}) is known as the Krawtchouk polynomial, which is originated
from harmonic analysis and plays an important role in
the current coding theory. See \cite{HeoK}, \cite{MWS} and \cite{vint}. Even though the equation (\ref{kra})
is not yet solved completely, there are several easy solutions. For example, in case of $m=n$ it is easy to see that
$(k,\ell)$ satisfies (\ref{kra}) if and only if both $k$ and $\ell$ are odd. In other word,
$(n,n,k,2n-k-2)$ is a solution of (\ref{kra}) for every odd number $k$ with $1\le k< n$.
Therefore, it is natural to ask whether (A) holds for these quadruplets or not. The case of $k=1$ with
the quadruplet $(n,n,1,2n-3)$ satisfying ({\ref{kra}) is of special interest,
because this gives rise to the minimum corank one for edge states, together with the minimum corank
of the partial transposes of edge states of corank one.

We think of an $n^2\times n^2$ matrix $\varrho$ in $M_n\otimes M_n$ as an $n\times n$ block
matrix, each of whose blocks $\varrho_{ij}$ is an $n\times n$ matrix.
Its partial transpose $\varrho^\Gamma$ is the result of swapping the
$(i,j)$th block with the $(j,i)$th block, that is, the $(i,j)$th block
of $\varrho^\Gamma$ is $\varrho_{ji}$. Let $\{e_1, e_2, \cdots, e_n\}$
denote the standard basis for $\CC^n$. We will use
$e_{ij}=e_i\otimes e_j$ with lexicographic ordering as the basis for
$\CC^n\otimes \CC^n$, so the rows and the columns of the $n^2 \times n^2$ matrix $\varrho$ are indexed by $e_{ij}$ in lexicographic order. 

It is clear that PPT states with corank zero are never edge states. It is also easy to construct
$n\ot n$ PPT states of corank one. For example, we take an $n\times n$ positive \semi matrix $A$ with corank one,
and consider an $n^2\times n^2$ matrix whose $(e_{11}, e_{22},\dots, e_{nn})$-principal submatrix is given by
$A$, together with suitably chosen diagonal entries, while all the other entries are zeros. Here, $(e_{11}, e_{22},\dots, e_{nn})$-principal submatrix means the submatrix of $A$ consisting of rows and columns indexed by $e_{11}, e_{22},\dots, e_{nn}$. % so that $\varrho^\Gamma$ is positive. %\semi.
After taking the partial transpose, it is not hard to see that the matrix $\varrho^\Gamma$ can be decomposed into principal submatrices of sizes $2\times 2$ or $1\times 1$. Indeed, the $(e_{ij},e_{ji})$-principal submatrix of $\varrho^\Gamma$ has size $2 \times 2$ and all the other nonzero entries are diagonal entries. Thus, one can choose the diagonal entries of $\varrho$ so that $\varrho^\Gamma$ is positive. 
This is in fact basically how Choi \cite{choi-ppt} and St\o rmer \cite{stormer82} constructed
special kinds of block matrices in $M_3(M_3)$,
which turn out to be $3\ot 3$ PPT entangled edge states with bi-ranks $(4,4)$ and $(6,7)$, respectively.
The same idea has been adopted to construct $3\ot 3$ edge states with bi-rank $(8,6)$ in \cite{kye-osaka}.
We note that the number of $2\times 2$ principal submatrices coincides with the corank of the partial transpose
in this construction of $3\otimes 3$ PPT states.

If we follow the above idea for $n\ot n$ cases with $n\ge 4$ then the number $n(n-1)/2$ of $2\times 2$ principal matrices
of $\varrho^\Gamma$ exceeds $2n-3$. We overcome this situation by using the following $d\times d$ matrices
\beq\label{1.10}P_d(z_2,\cdots,z_d)=\left( \begin{matrix}
2 & z_2 & 0 & 0 & \cdots &z_d\\
z_2^{-1} & 2 & z_3z_2^{-1} & 0 & \cdots &0\\
0&z_3^{-1}z_2 &2 &z_4z_3^{-1}&\cdots & 0\\
0&0&z_5^{-1}z_2  &2 &\ddots &\vdots\\
\vdots & \vdots &\ddots &\ddots &\ddots &\vdots\\
z_d^{-1}&0&0&\cdots & z_d^{-1}z_{d-1} &2
\end{matrix}\right)\in M_d\eeq
instead of $2\times 2$ matrices, as building blocks for the partial transpose $\varrho^\Gamma$ of edge states $\varrho$, where
$d\ge 4$ is an even integer and $z_j\in \CC$ with $|z_j|=1$ for $j=2,\cdots,d$.
The matrix $P_d(z_2,\cdots,z_d)$ is positive \semi of corank one. In fact, when $z_i=(-1)^{i-1}$ for  $2\le i\le d$,
we see that $P_d(-1,1,\cdots,-1)$ is the Cartan matrix $(2\delta_{ij}-a_{ij})$ of the graph with $d$ vertices and $d$ edges that form a cycle.
\[\xymatrix{
1\ar@{-}[r] & 2\ar@{-}[r] & 3\ar@{-}[r] & \cdots \ar@{-}[r] & d-1\\
&&d\ar@{-}[llu]\ar@{-}[urr]
}\]
Here $(a_{ij})$ is the adjacency matrix defined by $a_{ij}=1$ if the vertices $i$ and $j$ are connected by an edge and $a_{ij}=0$ if not.
It is well known that this Cartan matrix (of affine type) is positive \semi of corank one.
Since
\[P_d(z_2,\cdots,z_d)= \diag(1,-z_2,z_3,\cdots,-z_d)^{*} P_d(-1,1,\cdots,-1)\diag(1,-z_2,z_3,\cdots,-z_d), \]
we see that $P_d(z_2,\cdots,z_d)$ is always positive \semi of corank one.
The kernel of $P_d(z_2,\cdots,z_d)$ is spanned by the vector
\beq\label{1.11} (1,-z_2^{-1},z_3^{-1},-z_4^{-1},\cdots,-z_d^{-1})^\ttt\in\mathbb C^d.\eeq
For $d=2$ and $z\in \CC$ with $|z|=1$, we will use
$$
P_2(z)=\left( \begin{matrix}
1& z\\
z^{-1} & 1
\end{matrix}\right)\in M_2
$$
which is positive \semi of corank one with kernel spanned by $(1,-z^{-1})^\ttt.$

We will take the partial transposes of $\varrho^\Gamma$ to get edge states $\varrho=(\varrho^\Gamma)^\Gamma$ of corank one,
which have principal matrices of the form
\beq\label{1.15}Q_n(z_2,\cdots,z_n)=\left( \begin{matrix}
2 & z_2 & 0 & 0 & \cdots &0\\
z_2^{-1} & 2 & z_3 & 0 & \cdots &0\\
0&z_3^{-1} &2 &z_4&\cdots & 0\\
\cdots & \cdots &\cdots &\cdots &\cdots &\cdots\\
0&\cdots&0&z_{n-1}^{-1}&2&z_n\\
0&0&\cdots& 0& z_n^{-1} &2
\end{matrix}\right)\in M_n.\eeq
By row expansion and induction,
we see that this is positive definite for $z_i\in \CC$ with $|z_i|=1$.
In fact, the determinant of $Q_n(z_2,\cdots,z_n)$ is precisely $n+1$.

%%%%%%%%%%%%%%%%%%%%%%%%%%%%%%%%%%%%%%%%%%%%%%%%%%%%%%%%%%%%%%%%%%%%%%%%%%%%%%%%%%%%%%%%%%%%%%%%%%%%%%%%%%%%%%%%%%%%%%%%%%%%%%%%%%%%%%%%
%%%%%%%%%%%%%%%%%%%%%%%%%%%%%%%%%%%%%%%%%%%%%%%%%%%%%%%%%%%%%%%%%%%%%%%%%%%%%%%%%%%%%%%%%%%%%%%%%%%%%%%%%%%%%%%%%%%%%%%%%%%%%%%%%%%%%%%%
%%%%%%%%%%%%%%%%%%%%%%%%%%%%%%%%%%%%%%%%%%%%%%%%%%%%%%%%%%%%%%%%%%%%%%%%%%%%%%%%%%%%%%%%%%%%%%%%%%%%%%%%%%%%%%%%%%%%%%%%%%%%%%%%%%%%%%%%
%%%%%%%%%%%%%%%%%%%%%%%%%%%%%%%%%%%%%%%%%%%%%%%%%%%%%%%%%%%%%%%%%%%%%%%%%%%%%%%%%%%%%%%%%%%%%%%%%%%%%%%%%%%%%%%%%%%%%%%%%%%%%%%%%%%%%%%%
%%%%%%%%%%%%%%%%%%%%%%%%%%%%%%%%%%%%%%%%%%%%%%%%%%%%%%%%%%%%%%%%%%%%%%%%%%%%%%%%%%%%%%%%%%%%%%%%%%%%%%%%%%%%%%%%%%%%%%%%%%%%%%%%%%%%%%%%
%%%%%%%%%%%%%%%%%%%%%%%%%%%%%%%%%%%%%%%%%%%%%%%%%%%%%%%%%%%%%%%%%%%%%%%%%%%%%%%%%%%%%%%%%%%%%%%%%%%%%%%%%%%%%%%%%%%%%%%%%%%%%%%%%%%%%%%%
%%%%%%%%%%%%%%%%%%%%%%%%%%%%%%%%%%%%%%%%%%%%%%%%%%%%%%%%%%%%%%%%%%%%%%%%%%%%%%%%%%%%%%%%%%%%%%%%%%%%%%%%%%%%%%%%%%%%%%%%%%%%%%%%%%%%%%%%
\section{bilinear equations}\label{bilinear-equation}

For a given $\alpha=(\alpha_1,\alpha_2,\dots,\alpha_n)\in\mathbb
C^n$ with nonzero entries $\alpha_i\neq 0$, we consider the bilinear
equation $[i,j]_\alpha$ with unknowns $x=(x_1,\cdots,x_n)$, $y=(y_1,\cdots,y_n)\in\mathbb C^n$, defined
by
$$
[i,j]_\alpha=x_iy_j-\alpha_i^{-1}\alpha_jx_jy_i,
$$
for $i,j=1,2,\dots,n$.

\begin{lemma}\label{basic}
Suppose that $i$, $j$ and $k$ are mutually distinct. If
$[i,j]_\alpha=[i,k]_\alpha=0$ with $(x_i,y_i)\neq0$ then
$[j,k]_\alpha=0$.
\end{lemma}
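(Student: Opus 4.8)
The plan is to avoid dividing by the possibly vanishing quantities $x_i$ and $y_i$, and instead to establish the two auxiliary identities $x_i\,[j,k]_\alpha=0$ and $y_i\,[j,k]_\alpha=0$. Since the hypothesis $(x_i,y_i)\neq 0$ says that at least one of $x_i,y_i$ is nonzero, either identity then forces $[j,k]_\alpha=0$, which is exactly the desired conclusion.

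First I would clear denominators in the two given equations, rewriting $[i,j]_\alpha=0$ as $\alpha_i x_i y_j=\alpha_j x_j y_i$ and $[i,k]_\alpha=0$ as $\alpha_i x_i y_k=\alpha_k x_k y_i$. Keeping the relations in this multiplied-out form makes the bookkeeping of the nonzero scalars $\alpha_i,\alpha_j,\alpha_k$ transparent and avoids a clutter of inverses during the substitution.

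Next I would expand $x_i\,[j,k]_\alpha=x_i x_j y_k-\alpha_j^{-1}\alpha_k x_i x_k y_j$ and substitute. In the first term I use $x_i y_k=\alpha_i^{-1}\alpha_k x_k y_i$ to rewrite it as $\alpha_i^{-1}\alpha_k x_j x_k y_i$; in the second term I use $x_i y_j=\alpha_i^{-1}\alpha_j x_j y_i$ to rewrite it as the same expression $\alpha_i^{-1}\alpha_k x_j x_k y_i$, so the two terms cancel and $x_i\,[j,k]_\alpha=0$. The identity $y_i\,[j,k]_\alpha=0$ follows from the symmetric computation, substituting $x_j y_i=\alpha_i\alpha_j^{-1}x_i y_j$ and $x_k y_i=\alpha_i\alpha_k^{-1}x_i y_k$ into $y_i\,[j,k]_\alpha=y_i x_j y_k-\alpha_j^{-1}\alpha_k y_i x_k y_j$, where both terms reduce to $\alpha_i\alpha_j^{-1}x_i y_j y_k$ and again cancel.

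This is in essence a routine substitution, so there is no deep obstacle; the only subtlety is precisely the one the two-identity device is designed to sidestep, namely that one cannot simply cancel a factor of $x_i$ (or $y_i$), since either may vanish. Establishing the identities for both $x_i$ and $y_i$ and invoking $(x_i,y_i)\neq 0$ only at the very end is what secures the conclusion in full generality. I note that the distinctness of $i,j,k$ is not actually required for the algebraic identities themselves, which hold for arbitrary indices, but it is the natural setting in which the lemma will later be applied.
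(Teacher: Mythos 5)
Your proof is correct, and it is essentially the paper's argument with the linear algebra unwound into explicit elimination. The paper reads the hypotheses $[i,j]_\alpha=[i,k]_\alpha=0$ as a homogeneous $2\times 2$ linear system in the vector $(x_i,y_i)$; since this system has the nontrivial solution $(x_i,y_i)\neq 0$, the coefficient matrix must have vanishing determinant, and that determinant equals $\alpha_i^{-1}\alpha_j\,[j,k]_\alpha$, whence $[j,k]_\alpha=0$. Your two identities $x_i\,[j,k]_\alpha=0$ and $y_i\,[j,k]_\alpha=0$ are precisely the adjugate (Cramer) relations that underlie this determinant criterion, and you invoke $(x_i,y_i)\neq 0$ at exactly the same point the paper does; the only real difference is that your version is self-contained substitution (not citing the singular-system fact), at the cost of being slightly longer, and your closing observation that the distinctness of $i,j,k$ is not needed for the algebraic identities themselves is accurate.
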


\begin{proof}
Since the system
$$\left( \begin{matrix}
y_j & -\alpha_{i}^{-1}\al_j x_j \\ y_k & -\alpha_{i}^{-1}\al_k x_k
\end{matrix}\right) \left(\begin{matrix} x_i\\
y_i\end{matrix}\right)=\left(\begin{matrix} 0\\
0\end{matrix}\right)$$ of linear equations $[i,j]_{\alpha}=0$ and
$[i,k]_{\alpha}=0$ in $(x_i,y_i)$ has a nontrivial solution, the
determinant of the above $2\times 2$ matrix is zero, and hence we
have
$$
[j,k]_{\alpha}=x_jy_k-\alpha_{j}^{-1}\al_k
x_ky_j=\al_i\al_j^{-1}(\al_i^{-1}\al_j
x_jy_k-\al_i^{-1}\al_kx_ky_j)=0,
$$
as it is required.
\end{proof}

In this section, we fix $n=3,4,\dots$, and solve the system
\begin{equation}\label{eq}
\begin{array}{lll}
&[1,k]_\alpha- [2,k-1]_\alpha+[3,k-2]_\alpha-\cdots
+(-1)^{\lfloor\textfrac{k}{2}\rfloor-1}[\lfloor\textfrac{k}{2}\rfloor, k+1-\lfloor \textfrac{k}{2}\rfloor
]_\alpha =0,
\\
&[n-\ell,n]_{\beta}-[n-\ell+1,n-1]_{\beta}+\cdots
+(-1)^{\lfloor\textfrac{\ell-1}{2}\rfloor}[n-\ell+\lfloor\textfrac{\ell-1}{2}\rfloor,
n-\lfloor\textfrac{\ell-1}{2}\rfloor]_{\beta}=0,
\end{array}
\end{equation}
of equations with $k=2,3,\dots, n$ and $\ell=1,2,\dots,n-2$, where
$\alpha,\beta\in\mathbb C^n$ have no zero entry. Here, $\lfloor s\rfloor$ denotes the greatest integer less than or equal to $s$.
When $n=3$ and $n=4$, (\ref{eq}) becomes
$$
[1,2]_\alpha=0,\quad [1,3]_\al=0,\quad [2,3]_\beta= 0
$$
and
$$
[1,2]_\alpha=0,\quad [1,3]_\al=0,\quad [1,4]_\al-[2,3]_\al=0,\
[3,4]_\beta=0,\ [2,4]_\beta=0,
$$
respectively. When $n=5$ and $n=6$, (\ref{eq}) tells us that the
following forms
$$
[1,2]_\alpha,\ [1,3]_\al,\ [1,4]_\al-[2,3]_\al,\
[1,5]_\al-[2,4]_\al,\ [4,5]_\beta,\ [3,5]_\beta,\
[2,5]_\beta-[3,4]_\beta
$$
and
$$
\begin{aligned}
&[1,2]_\alpha,\ [1,3]_\al,\ [1,4]_\al-[2,3]_\al,\ [1,5]_\al-[2,4]_\al,\ [1,6]_\al -[2,5]_\al +[3,4]_\al,\\
&[5,6]_\beta,\ [4,6]_\beta,\ [3,6]_\beta-[4,5]_\beta,\
[2,6]_\beta-[3,5]_\beta
\end{aligned}
$$
are zeros, respectively. Figure 1 shows which $[j,k]_\alpha$ and
$[j,k]_\beta$ appear in the equation (\ref{eq}). The following lemma
shows that all such $[j,k]_\alpha$ and $[j,k]_\beta$ must be zero.

\newcommand\cii{\circle*{0.4}}
\begin{figure}
\begin{center}
\setlength{\unitlength}{.5 truecm}
\begin{picture}(27,8)
\put(1,4){\line(1,0){2}} \put(1,5){\line(1,0){2}}
\put(1,6){\line(1,0){2}} \put(1,4){\line(0,1){2}}
\put(2,4){\line(0,1){2}} \put(3,4){\line(0,1){2}} \put(0.5,5.7){$1$}
\put(0.5,4.7){$2$} \put(0.5,3.7){$3$} \put(0.5,2.7){$j$}
\put(0.8,6.3){$1$} \put(1.8,6.3){$2$} \put(2.8,6.3){$3$}
\put(3.8,6.3){$k$} \put(2,6){\cii}\put(3,6){\cii}
\put(3,5){\circle{0.4}}

\put(5,3){\line(1,0){3}} \put(5,4){\line(1,0){3}}
\put(5,5){\line(1,0){3}} \put(5,6){\line(1,0){3}}
\put(5,3){\line(0,1){3}} \put(6,3){\line(0,1){3}}
\put(7,3){\line(0,1){3}} \put(8,3){\line(0,1){3}}
\put(6,6){\cii}\put(7,6){\cii}\put(8,6){\cii}\put(7,5){\cii}
\put(8,5){\circle{0.4}}\put(8,4){\circle{0.4}}

\put(9,2){\line(1,0){4}} \put(9,3){\line(1,0){4}}
\put(9,4){\line(1,0){4}} \put(9,5){\line(1,0){4}}
\put(9,6){\line(1,0){4}} \put(9,2){\line(0,1){4}}
\put(10,2){\line(0,1){4}} \put(11,2){\line(0,1){4}}
\put(12,2){\line(0,1){4}} \put(13,2){\line(0,1){4}}
\put(10,6){\cii}\put(11,6){\cii}\put(12,6){\cii}\put(13,6){\cii}\put(11,5){\cii}\put(12,5){\cii}
\put(13,5){\circle{0.4}}\put(13,4){\circle{0.4}}\put(13,3){\circle{0.4}}\put(12,4){\circle{0.4}}

\put(14,1){\line(1,0){5}} \put(14,2){\line(1,0){5}}
\put(14,3){\line(1,0){5}} \put(14,4){\line(1,0){5}}
\put(14,5){\line(1,0){5}} \put(14,6){\line(1,0){5}}
\put(14,1){\line(0,1){5}} \put(15,1){\line(0,1){5}}
\put(16,1){\line(0,1){5}} \put(17,1){\line(0,1){5}}
\put(18,1){\line(0,1){5}} \put(19,1){\line(0,1){5}}
\put(15,6){\cii}\put(16,6){\cii}\put(17,6){\cii}\put(18,6){\cii}\put(19,6){\cii}
\put(16,5){\cii}\put(17,5){\cii}\put(18,5){\cii}\put(17,4){\cii}
\put(19,5){\circle{0.4}}\put(19,4){\circle{0.4}}\put(19,3){\circle{0.4}}\put(19,2){\circle{0.4}}
\put(18,4){\circle{0.4}}\put(18,3){\circle{0.4}}

\put(20,0){\line(1,0){6}} \put(20,1){\line(1,0){6}}
\put(20,2){\line(1,0){6}} \put(20,3){\line(1,0){6}}
\put(20,4){\line(1,0){6}} \put(20,5){\line(1,0){6}}
\put(20,6){\line(1,0){6}} \put(20,0){\line(0,1){6}}
\put(21,0){\line(0,1){6}} \put(22,0){\line(0,1){6}}
\put(23,0){\line(0,1){6}} \put(24,0){\line(0,1){6}}
\put(25,0){\line(0,1){6}} \put(26,0){\line(0,1){6}}
\put(21,6){\cii}\put(22,6){\cii}\put(23,6){\cii}\put(24,6){\cii}\put(25,6){\cii}\put(26,6){\cii}
\put(22,5){\cii}\put(23,5){\cii}\put(24,5){\cii}\put(25,5){\cii}\put(23,4){\cii}\put(24,4){\cii}
\put(26,5){\circle{0.4}}\put(26,4){\circle{0.4}}\put(26,3){\circle{0.4}}\put(26,2){\circle{0.4}}\put(26,1){\circle{0.4}}
\put(25,4){\circle{0.4}}\put(25,3){\circle{0.4}}\put(25,2){\circle{0.4}}\put(24,3){\circle{0.4}}
\end{picture}
\end{center}
\caption{Bullets and circles represent the positions $(j,k)$ for
which the forms $[j,k]_\alpha$ and $[j,k]_\beta$ appear in the
system (\ref{eq}) of equations, respectively, for $n=3,4,5,6,7$. }
\end{figure}
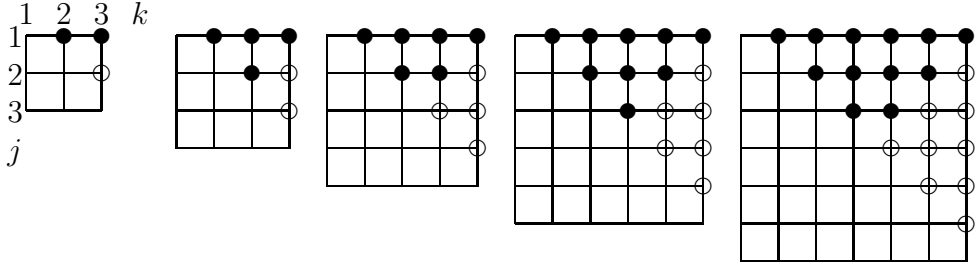
\medskip

\begin{lemma}\label{1.3}
If $x,y\in\mathbb C^n$ satisfy {\rm
(\ref{eq})}, then we have
\begin{equation}\label{1.4}
\begin{array}{lll}
&[j,k]_{\alpha}=0,\quad &\text{for all}\ (j,k)\ \text{with}\ k\ge j+1 \text{ and } k\le n+1-j,\\
&[j,k]_{\beta}=0,\quad &\text{for all}\  (j,k)\ \text{with}\ k\ge
j+1 \text{ and } k> n+1-j.
\end{array}
\end{equation}
\end{lemma}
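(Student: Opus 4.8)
The plan is to prove the two families of identities in \eqref{1.4} by separate inductions along the anti-diagonals $j+k=\text{const}$, using Lemma~\ref{basic} as the only engine. The starting observation is that the $\alpha$-equation of \eqref{eq} indexed by $k$ collects precisely those forms $[j,k']_\alpha$ with $j<k'$ lying on the anti-diagonal $j+k'=k+1$ inside the $\alpha$-region $j+k'\le n+1$, each with coefficient $\pm1$; dually, the $\beta$-equation indexed by $\ell$ collects the forms on the anti-diagonal $j+k'=2n-\ell$ inside the $\beta$-region $j+k'\ge n+2$. Thus the $\alpha$-equations sweep the anti-diagonal sums $s=3,\dots,n+1$ and the $\beta$-equations sweep $s=n+2,\dots,2n-1$, and since the two subsystems involve disjoint parameters they may be treated independently. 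A second observation disposes of the degenerate coordinates once and for all: if $(x_i,y_i)=0$ then every form $[i,\cdot]_\alpha$ and $[\cdot,i]_\alpha$ (and likewise for $\beta$) vanishes identically, so writing $N=\{i:(x_i,y_i)\ne 0\}$ it suffices to prove the vanishing only for pairs $(j,k)$ with $j,k\in N$.

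For the $\alpha$-family I would induct on the anti-diagonal sum $s$, increasing from $s=3$. Fix the equation whose terms lie on $j+k=s$, and call a term $[j,s-j]_\alpha$ \emph{surviving} if $j,s-j\in N$ (each non-surviving term is automatically zero by the second observation). Let $j^\ast$ be the smallest index occurring in a surviving term. For any surviving term with $j>j^\ast$, the three indices $j^\ast<j<s-j$ are distinct, $j^\ast\in N$, and the two forms $[j^\ast,j]_\alpha$ and $[j^\ast,s-j]_\alpha$ have anti-diagonal sums $j^\ast+j<s$ and $j^\ast+s-j<s$; hence they vanish by the induction hypothesis, and Lemma~\ref{basic} (pivoting at $i=j^\ast$) forces $[j,s-j]_\alpha=0$. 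After deleting all these vanishing terms the equation reduces to $\pm[j^\ast,s-j^\ast]_\alpha=0$, so the one remaining surviving term vanishes as well. This yields $[j,k]_\alpha=0$ for every pair on $j+k=s$ inside the $\alpha$-region, completing the induction; the base sums $s=3,4$ come for free because their equations are the single terms $[1,2]_\alpha$ and $[1,3]_\alpha$.

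The $\beta$-family is handled by the mirror-image argument: induct on $s$ \emph{decreasing} from $s=2n-1$, anchored by the single-term equations $[n-1,n]_\beta=0$ and $[n-2,n]_\beta=0$, and in each equation pivot at the \emph{largest} index $k^\ast$ appearing in a surviving term. For a surviving term $[j,s-j]_\beta$ with $s-j<k^\ast$ one has $j<s-j<k^\ast$, and the auxiliary forms $[j,k^\ast]_\beta$ and $[s-j,k^\ast]_\beta$ have sums $j+k^\ast>s$ and $(s-j)+k^\ast>s$, so they are already known to vanish; Lemma~\ref{basic} then kills that term, leaving only $[s-k^\ast,k^\ast]_\beta$, which the equation forces to zero.

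I expect the only real subtlety to be the bookkeeping that makes the induction uniform in the presence of vanishing coordinates: one must never invoke Lemma~\ref{basic} with a pivot index outside $N$, and must check at each step that the two auxiliary forms genuinely lie on earlier anti-diagonals and remain within the correct ($\alpha$- or $\beta$-) region. The ``surviving term'' formulation is precisely what guarantees both, since the extreme surviving index $j^\ast$ (resp.\ $k^\ast$) lies in $N$ by definition, and its extremality places the two auxiliary sums automatically on the already-processed side of $s$. Everything else is the routine verification that the index ranges appearing in \eqref{eq} match the anti-diagonals recorded in \eqref{1.4}.
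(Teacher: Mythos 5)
Your proof is correct, and it is organized differently from the paper's. Both arguments use Lemma~\ref{basic} in the same way---kill every term of an anti-diagonal equation except one by means of vanishing already established, then let the equation itself kill the last term---but the paper's induction is on the number of indices, while yours is on the anti-diagonal sum. Concretely, the paper first proves $[1,k]_\alpha=0$ for all $k$ (trivially if $(x_1,y_1)=(0,0)$, otherwise by a bootstrap along anti-diagonals always pivoting at the index $1$), then deletes index $1$ and appeals to ``induction on $n$,'' with the mirror argument for $\beta$. Your single induction on $s$, with the pivot taken at the extreme \emph{surviving} index of each anti-diagonal, buys two things: the zero coordinates are absorbed uniformly into the set $N$ with no case split, and no reduction to a smaller system is ever needed. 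The latter is a genuine advantage: after deleting index $1$, the reduced system is \emph{not} literally an instance of \eqref{eq} for $n-1$, because the threshold $j+k\le n+1$ separating the $\alpha$- and $\beta$-regions stays put while the index set shrinks; making the paper's induction on $n$ airtight requires generalizing the statement to an arbitrary threshold, a bookkeeping step your route avoids entirely. What the paper's fixed pivot buys in return is a stronger intermediate conclusion: when $(x_1,y_1)\neq(0,0)$, one gets $[i,j]_\alpha=0$ for \emph{all} pairs $i<j$, not merely those with $i+j\le n+1$. One point you should make explicit: in the $\beta$-part you apply Lemma~\ref{basic} with the pivot $k^\ast$ occurring as the \emph{second} index of the two known-vanishing forms, whereas the lemma is stated with the pivot as the common first index; this is harmless since $[j,k]_\beta=-\beta_j^{-1}\beta_k\,[k,j]_\beta$, so $[j,k]_\beta=0$ if and only if $[k,j]_\beta=0$, but the identity deserves a line.
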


\begin{proof}
First we prove that $[1,k]_{\al}=0$ for $k=2,3,\dots,n$. When
$(x_1,y_1)=(0,0)$, this is trivial. So we may assume $(x_1,y_1)\ne
(0,0)$. Then $[1,2]_\alpha=[1,3]_\alpha=0$ implies $[2,3]_\alpha=0$
by Lemma \ref{basic}. Then by $[1,4]_\al-[2,3]_\al=0$, we have
$[1,4]_{\al}=0.$ From the system $[1,2]_{\alpha}=0=[1,4]_{\al}$, we
also have $[2,4]_{\al}=0$, and hence $[1,5]_{\al}=0$ by
$[1,5]_\al-[2,4]_\al=0$. Continuing in this way, we find that
$[1,k]_{\al}=0$ for each $k=2,3,\dots,n$. Deleting $[1,k]_{\al}$
from \eqref{eq} and \eqref{1.4}, we are in the situation with less variables.
Induction on $n$ completes the proof for $[j,k]_\alpha$. The exactly
same argument can be applied for $[j,k]_\beta$.
\end{proof}

In this paper, we assume the following:
\beq \al_i^{-1}\al_j\ne \beta_i^{-1}\beta_j\quad \text{for }1\le i<j\le n.\eeq
For a given solution $x,y\in\mathbb C^n$ of the system \eqref{eq}, we put
$v_i=(x_i,y_i)\in\mathbb C^2$ for $i=1,2,\dots,n$. From now on, we
assume that both $x$ and $y$ are nonzero, and denote by $p$ and $q$
the smallest and largest number $i=1,2,\dots,n$ so that $v_i\neq 0$,
respectively. Then we have the following four cases (See Figure 2.):
\begin{itemize}
\item
$1\le p\le q\le\textfrac{n+1}2$;
\item
$1\le p < \textfrac{n+1}2<q\le n$ and $p+q\le n+1$;
\item
$1\le p < \textfrac{n+1}2<q\le n$ and $p+q> n+1$;
\item
$\textfrac{n+1}2  \le p\le q\le n$.
\end{itemize}

\begin{figure}
\begin{center}
\setlength{\unitlength}{.5 truecm}
\begin{picture}(28,7)
\put(0,0){\line(1,0){6}}\put(0,0){\line(0,1){6}}
\put(6,6){\line(-1,0){6}}\put(6,6){\line(0,-1){6}}
\put(7,0){\line(1,0){6}}\put(7,0){\line(0,1){6}}
\put(13,6){\line(-1,0){6}}\put(13,6){\line(0,-1){6}}
\put(14,0){\line(1,0){6}}\put(14,0){\line(0,1){6}}
\put(20,6){\line(-1,0){6}}\put(20,6){\line(0,-1){6}}
\put(21,0){\line(1,0){6}}\put(21,0){\line(0,1){6}}
\put(27,6){\line(-1,0){6}}\put(27,6){\line(0,-1){6}}
\put(6,0){\line(-1,1){6}}\put(13,0){\line(-1,1){6}}\put(20,0){\line(-1,1){6}}\put(27,0){\line(-1,1){6}}
\put(6,6){\line(-1,-1){3}}\put(13,6){\line(-1,-1){3}}\put(20,6){\line(-1,-1){3}}\put(27,6){\line(-1,-1){3}}
\linethickness{2pt}
\put(0.5,3.5){\line(1,0){2}}\put(0.5,3.5){\line(0,1){2}}\put(2.5,5.5){\line(-1,0){2}}\put(2.5,5.5){\line(0,-1){2}}
\put(8,2){\line(1,0){3}}\put(8,2){\line(0,1){3}}\put(11,5){\line(-1,0){3}}\put(11,5){\line(0,-1){3}}
\put(16,1){\line(1,0){3}}\put(16,1){\line(0,1){3}}\put(19,4){\line(-1,0){3}}\put(19,4){\line(0,-1){3}}
\put(24.5,0.5){\line(1,0){2}}\put(24.5,0.5){\line(0,1){2}}\put(26.5,2.5){\line(-1,0){2}}\put(26.5,2.5){\line(0,-1){2}}
\put(-0.4,3.3){$q$}\put(-0.45,5.3){$p$}\put(0.4,6.2){$p$}\put(2.4,6.1){$q$}
\end{picture}
\end{center}
\caption{Four possible locations of the square $[p,q]\times [p,q]$.
To solve the equation (\ref{eq}), it is enough to consider the
equations \lq inside\rq\ the square. }
\end{figure}
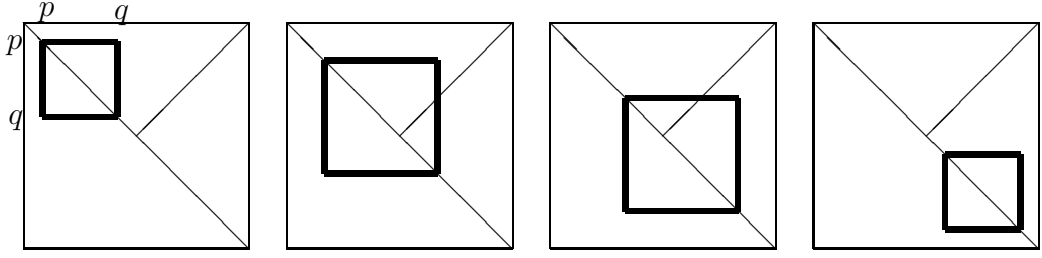
\medskip

In the first case, the equation (\ref{eq}) is reduced to
$\al_ix_iy_j=\al_jx_jy_i$ for every $i,j=p,p+1,\dots,q$, and so we
see that $(\al_px_p,\dots,\al_qx_q)$ is parallel to $(y_p,\dots,
y_q)$, and the solutions are given by $v_j=(c_jt,c_j\al_j)$ for
$j=p,\dots,q$ with $t\neq 0$.

In the second case, we first note $p\le n-q+1<q$. We will show
\begin{equation}\label{zero}
v_j=0\quad \text{for}\ n-q+1<j<q,
\end{equation}
which implies that $(\al_px_p,\al_{p+1}x_{p+1},\dots,
\al_{n-q+1}x_{n-q+1},\alpha_q x_q)$ and $(y_p,y_{p+1},\dots, y_{n-q+1},y_q)$
are parallel. We may suppose that $q-p>1$, because there is nothing
to prove when $q-p=1$. For each $p<j<q$, we have
$[p,j]_\al=[p,q]_\al=0$ by Lemma \ref{1.3}, and so we have
$[j,q]_\al=0$ by Lemma \ref{basic}. We also have $[j,q]_\beta=0$ by
Lemma \ref{1.3} again for $ n-q+1<j<q$. Since
$\alpha^{-1}_j\alpha_q\neq \beta^{-1}_j\beta_q$, we have
$x_jy_q=x_qy_j=0$ for $n-q+1<j<q$. Therefore,
\eqref{zero} follows once we prove
that $x_q\neq 0$ and $y_q\neq 0$.
If one of them is zero, say, $x_q=0$, then $y_q\neq 0$ implies that $x=0$
because $[j,q]_\al=0$ for all $p<j<q$. This contradicts the assumption that $x\neq 0$.

The third and fourth cases can be solved by the same ways as the
second and first cases, respectively. We summarize as follows:

\begin{lemma}\label{1.6}
Let $\alpha_i,\beta_i\in \CC^*$ and let $\al_{i,j}=\al_i^{-1}\al_j$
and $\beta_{i,j}=\beta_i^{-1}\beta_j$ for $1\le i, j\le n$. We assume $\al_{i,j}\ne \beta_{i,j}$ for $1\le
i<j\le n$. If $x,y\in \CC^n$ satisfy {\rm (\ref{eq})}, then one of the
following holds:
\begin{enumerate}
\item[{\rm (i)}]
$x=0$ or $y=0$;
\item[{\rm (ii)}]
$v_j=(c_jt,c_j\al_{j})$ for $t\in \CC^*$, $c_j\in \CC$ if $j\le
\frac{n+1}2$ and $v_j=0$ if $j>\frac{n+1}2$;
\item[{\rm (iii)}]
$v_1=\cdots=v_{p-1}=0$, $v_{n-q+2}=\cdots =v_{q-1}=0$,
$v_{q+1}=\cdots=v_n=0$ and $v_j=(c_jt,c_j\al_{j})$ for all $j$
with $p\le j\le n-q+1 < \frac{n+1}2$ or $j=q$, where $t\in \CC^*$,
$c_j\in \CC$, $c_p\ne 0$ and $c_{q}\ne 0$;
\item[{\rm (iv)}]
$v_1=\cdots=v_{p-1}=0$, $v_{p+1}=\cdots =v_{n-p+1}=0$,
$v_{q+1}=\cdots=v_n=0$ and
$v_{j}=(c_{j}t,c_{j}\beta_{j})$ for all $j$ with
$\frac{n+3}2 <n-p+2\le j\le q $ or $j=p$, where $t\in \CC^*$,
$c_{j}\in \CC$, $c_{p}\ne 0$ and $c_{q}\ne 0$;
\item[{\rm (v)}]
$v_j=(c_jt,c_j\beta_{j})$ for $t\in \CC^*$, $c_j\in \CC$ if $j\ge
\frac{n+1}2$ and $v_j=0$ if $j<\frac{n+1}2$. 

\end{enumerate}
\end{lemma}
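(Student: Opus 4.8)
The plan is to reduce the whole statement to the four geometric configurations of the square $[p,q]\times[p,q]$ drawn in Figure 2, and to read off one of the five conclusions in each configuration. If $x=0$ or $y=0$ we are in case (i), so I would assume $x,y\neq 0$ and introduce $p,q$ as the smallest and largest indices with $v_i\neq 0$. The argument rests entirely on two facts already available: Lemma \ref{1.3}, which records exactly which forms vanish — with the anti-diagonal $j+k=n+1$ separating the $\alpha$-region ($j+k\le n+1$) from the $\beta$-region ($j+k\ge n+2$) — and Lemma \ref{basic}, which propagates a vanishing form across a shared index whenever some $v_i\neq 0$.

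First I would dispose of the two extreme cases, where the square lies wholly on one side of the anti-diagonal. When $q\le\frac{n+1}2$ every pair $p\le j<k\le q$ has $j+k\le 2q\le n+1$, so Lemma \ref{1.3} gives $[j,k]_\alpha=0$ throughout; hence $(\alpha_px_p,\dots,\alpha_qx_q)$ is parallel to $(y_p,\dots,y_q)$, which after setting the out-of-range coordinates to zero is exactly conclusion (ii). Symmetrically, when $p\ge\frac{n+1}2$ each pair satisfies $j+k\ge 2p+1\ge n+2$, all relevant $[j,k]_\beta$ vanish, and we land in (v). Both cases are essentially bookkeeping once the inequalities $2q\le n+1$ and $2p+1\ge n+2$ are noted.

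The real work is the two straddling cases, and I expect case (iii) (where $p+q\le n+1$) to be the main obstacle, case (iv) being its mirror image under exchanging $\alpha\leftrightarrow\beta$ and $p\leftrightarrow q$. The idea is to kill the interior coordinates $n-q+1<j<q$. For each $p<j<q$ one has $[p,j]_\alpha=[p,q]_\alpha=0$ by Lemma \ref{1.3} (both pairs lie in the $\alpha$-region, since $p+j<p+q\le n+1$), so Lemma \ref{basic} applied at the nonzero index $p$ yields $[j,q]_\alpha=0$. For the interior indices one moreover has $[j,q]_\beta=0$; subtracting the two identities and using the standing hypothesis $\alpha_{j,q}\ne\beta_{j,q}$ forces $x_qy_j=x_jy_q=0$. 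The delicate step, and the one I would be most careful about, is upgrading this to $v_j=0$, which needs \emph{both} $x_q\neq 0$ and $y_q\neq 0$. I would argue by contradiction: if, say, $x_q=0$, then $v_q\neq 0$ gives $y_q\neq 0$, and $[j,q]_\alpha=0$ collapses to $x_jy_q=0$, so $x_j=0$ for every index in range and hence $x=0$, contradicting $x\neq 0$. Once the interior vanishes, the surviving indices $p\le j\le n-q+1$ together with $j=q$ all lie in the $\alpha$-region, their pairwise forms $[j,k]_\alpha$ vanish, and the resulting parallel structure is precisely (iii). Repeating the argument with the roles of $\alpha,\beta$ and of $p,q$ exchanged gives (iv), completing the classification.
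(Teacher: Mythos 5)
Your proposal is correct and takes essentially the same route as the paper: the identical four-case analysis of where the square $[p,q]\times[p,q]$ sits relative to the anti-diagonal, with the extreme cases handled by Lemma \ref{1.3} alone, and the straddling cases handled by combining Lemma \ref{1.3} with Lemma \ref{basic} to force $[j,q]_\alpha=[j,q]_\beta=0$ on the interior indices, then invoking $\al_{j,q}\ne\beta_{j,q}$ and the same contradiction argument to get $x_q\neq 0$, $y_q\neq 0$ and hence $v_j=0$ there. The paper's proof is exactly this, including treating cases (iv) and (v) by symmetry.
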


Our strategy to construct a PPT entangled edge state $\varrho$ is as follows. We construct $\varrho^\Gamma$ which has $(2n-3)$ principal submatrices of the form $P_d(z_2,\cdots,z_d)$, where $z_i$'s are either $1$, $\alpha_{i,j}$ or $\beta_{i,j}$. These parameters are chosen so that $\bx\otimes y \in \im \rhog$ if and only if $x,y\in \CC^n$ satisfy {\rm (\ref{eq})} and hence Lemma \ref{1.6}. Next, we choose the diagonal entries of $\varrho^\Gamma$ which make $\varrho$ positive of corank one. Finally, we choose parameters $\alpha_i$'s and $\beta_i$'s so that the conditions $x\otimes y \in \im \rho$ and $\bx\otimes y \in \im \rhog$ imply that either $x=0$ or $y=0$. In the next two sections, we provide explicit examples in $3\otimes 3$ and $4\otimes 4$ system.

%%%%%%%%%%%%%%%%%%%%%%%%%%%%%%%%%%%%%%%%%%%%%%%%%%%%%%%%%%%%%%%%%%%%%%%%%%%%%%%%%%%%%%%%%%%%%%%%%%%%%%%%%%%%%%%%%%%%%%%%%%%%%%%%%%%%%%%%
%%%%%%%%%%%%%%%%%%%%%%%%%%%%%%%%%%%%%%%%%%%%%%%%%%%%%%%%%%%%%%%%%%%%%%%%%%%%%%%%%%%%%%%%%%%%%%%%%%%%%%%%%%%%%%%%%%%%%%%%%%%%%%%%%%%%%%%%
%%%%%%%%%%%%%%%%%%%%%%%%%%%%%%%%%%%%%%%%%%%%%%%%%%%%%%%%%%%%%%%%%%%%%%%%%%%%%%%%%%%%%%%%%%%%%%%%%%%%%%%%%%%%%%%%%%%%%%%%%%%%%%%%%%%%%%%%
%%%%%%%%%%%%%%%%%%%%%%%%%%%%%%%%%%%%%%%%%%%%%%%%%%%%%%%%%%%%%%%%%%%%%%%%%%%%%%%%%%%%%%%%%%%%%%%%%%%%%%%%%%%%%%%%%%%%%%%%%%%%%%%%%%%%%%%%
%%%%%%%%%%%%%%%%%%%%%%%%%%%%%%%%%%%%%%%%%%%%%%%%%%%%%%%%%%%%%%%%%%%%%%%%%%%%%%%%%%%%%%%%%%%%%%%%%%%%%%%%%%%%%%%%%%%%%%%%%%%%%%%%%%%%%%%%
%%%%%%%%%%%%%%%%%%%%%%%%%%%%%%%%%%%%%%%%%%%%%%%%%%%%%%%%%%%%%%%%%%%%%%%%%%%%%%%%%%%%%%%%%%%%%%%%%%%%%%%%%%%%%%%%%%%%%%%%%%%%%%%%%%%%%%%%
%%%%%%%%%%%%%%%%%%%%%%%%%%%%%%%%%%%%%%%%%%%%%%%%%%%%%%%%%%%%%%%%%%%%%%%%%%%%%%%%%%%%%%%%%%%%%%%%%%%%%%%%%%%%%%%%%%%%%%%%%%%%%%%%%%%%%%%%
\section{$3\otimes 3$ edge states of corank one}\label{3x3}

Let $\al_i, \beta_i\in \CC$ with $|\alpha_i|=|\beta_i|=1$ for $1\le i\le 3$.
As above, we write $\al_{i,j}=\al_i^{-1}\al_j$ and $\beta_{i,j}=\beta_i^{-1}\beta_j$.
We also assume $\al_{i,j}\ne \beta_{i,j}$ for $1\le i<j\le 3$.

Let $\rhog$ be the $9\times 9$ matrix defined as follows:
\begin{enumerate}
\item The $(e_{12}, e_{21})$-principal submatrix is
$P_2(\al_{1,2})$.
\item The $(e_{13}, e_{31})$-principal submatrix is
$2P_2(\al_{1,3})$.
\item The $(e_{23}, e_{32})$-principal submatrix is
$P_2(\beta_{2,3})$.
\item The $(e_{11},e_{22},e_{33})$-principal submatrix is $rI_3$ for $r>1$ to be determined later.
\item All the other entries are zero.
\end{enumerate}
Namely we have
$$
\rhog=\left(\begin{array}{ccc|ccc|ccc}
r& \cdot&\cdot&  \cdot& \cdot&\cdot &   \cdot& \cdot&\cdot\\
\cdot& 1&\cdot&  \al_{1,2}& \cdot&\cdot   &\cdot& \cdot&\cdot\\
\cdot& \cdot&2 &\cdot& \cdot&\cdot &2\al_{1,3}& \cdot&\cdot\\
\hline
\cdot&\al_{1,2}^{-1}&\cdot&  1& \cdot&\cdot &\cdot& \cdot&\cdot\\
\cdot& \cdot&\cdot &   \cdot& r&\cdot &\cdot& \cdot&\cdot\\
\cdot& \cdot&\cdot &   \cdot& \cdot&1 &\cdot& \beta_{2,3}&\cdot\\
\hline
\cdot& \cdot&2\al_{1,3}^{-1} &\cdot& \cdot&\cdot &2& \cdot&\cdot\\
\cdot& \cdot&\cdot &\cdot& \cdot&\beta_{2,3}^{-1} &\cdot& 1&\cdot\\
\cdot& \cdot&\cdot &\cdot& \cdot&\cdot &\cdot& \cdot&r
\end{array}\right),
$$
where $\cdot$ denotes zero.

Since the matrix $P_2(z)$ of $|z|=1$ has corank one, it is easy to see that $\rhog$ is positive \semi of corank 3.
The kernel is spanned by the vectors
$$
e_{12}-\bar\al_{1,2}e_{21},\qquad
e_{13}-\bar\al_{1,3}e_{31},\qquad
e_{23}-\bar\beta_{2,3}e_{32}.
$$
Hence $\bx\otimes y\in \im \rhog$ if and only if
\beq\label{1.14}\bx_1y_2=\al_{1,2} \bx_2 y_1, \quad \bx_1y_3=\al_{1,3} \bx_3 y_1, \quad \bx_2y_3=\beta_{2,3} \bx_3y_2.\eeq
By Lemma \ref{1.6}, there are four possibilities:
\begin{enumerate}
\item $x=0$ or $y=0$;
\item  $(\bx_1,y_1)=(c_1t,c_1\al_1)$, $(\bx_2,y_2)=(0,0)$ and $(\bx_3,y_3)=(c_3 t, c_3\al_3)$ for $t\ne 0$ and $c_1,c_3\in \CC^*$;
\item $(\bx_1,y_1)=(c_1t,c_1\al_1)$, $(\bx_2,y_2)=(c_2t,c_2\al_2)$ and $(\bx_3,y_3)=(0,0)$ for $t\ne 0$ and $c_1,c_2\in \CC$;
\item $(\bx_1,y_1)=(0,0)$, $(\bx_2,y_2)=(c_2t,c_2\beta_2)$ and $(\bx_3,y_3)=(c_3t,c_3\beta_3)$ for $t\ne 0$ and $c_2,c_3\in \CC^*$.
\end{enumerate}

The partial transpose $\varrho$ of $\rhog$ is given by
$$
\varrho=\left(\begin{array}{ccc|ccc|ccc}
r& \cdot&\cdot&  \cdot& \al_{1,2}^{-1}&\cdot &   \cdot& \cdot&2\al_{1,3}^{-1}\\
\cdot& 1&\cdot&  \cdot& \cdot&\cdot   &\cdot& \cdot&\cdot\\
\cdot& \cdot&2 &\cdot& \cdot&\cdot & \cdot& \cdot&\cdot\\
\hline
\cdot&\cdot&\cdot&  1& \cdot&\cdot &\cdot& \cdot&\cdot\\
\al_{1,2}& \cdot&\cdot &   \cdot& r&\cdot &\cdot& \cdot&\beta_{2,3}^{-1}\\
\cdot& \cdot&\cdot &   \cdot& \cdot&1 &\cdot& \cdot&\cdot\\
\hline
\cdot& \cdot&\cdot &\cdot& \cdot&\cdot &2& \cdot&\cdot\\
\cdot& \cdot&\cdot &\cdot& \cdot&\cdot &\cdot& 1&\cdot\\
2\al_{1,3}& \cdot&\cdot &\cdot& \beta_{2,3}&\cdot &\cdot& \cdot&r
\end{array}\right).
$$
We see that $\varrho$ is a positive \semi matrix of corank one if and only if
its $(e_{11}, e_{22}, e_{33})$-principal submatrix
$$D_3^{\al,\beta}(r)=\left(\begin{matrix}
r&\al_{1,2}^{-1}&2\al_{1,3}^{-1}\\
\al_{1,2} & r&\beta_{2,3}^{-1}\\
2\al_{1,3} & \beta_{2,3} &r
\end{matrix}\right).$$
is such a matrix.

Now we let $(\al_1, \al_2, \al_3)=(1, \al, \al^2)$ for $\alpha\ne \pm 1\in \CC$ with $|\al|=1$ and let $(\beta_1,\beta_2,\beta_3)=(1,1,1)$. Then
$$D_3^{\al,\beta}(r)=\left(\begin{matrix}
r&\al^{-1}&2\al^{-2}\\
\al & r&1\\
2\al^2 & 1 &r
\end{matrix}\right).$$
Let $\hat{r}$ be the largest zero of the equation \[\det D_3(r)=r^3-6r+2\al+2\al^{-1}=0.\] Then {$\hat{r}> 0$} and
$\hat{r}$ is a simple root since $|\al+\al^{-1}|<2\sqrt{2}$. So, $D_3(\hat{r})$ is a positive \semi matrix of corank one.
By direct computation, the kernel vector is given by
$$
(-2\bar \al^2 +\hat r^{-1}\bar\al,\ 2\bar\al\hat r^{-1} -1,\ \hat r-\hat r^{-1})^\ttt\in\mathbb C^3.
$$
Hence $x\otimes y\in \im \varrho$ if and only if
$$(-2\al^2+\hat{r}^{-1}\al)x_1y_1+(2\al \hat{r}^{-1}-1)x_2y_2+(\hat{r}-\hat{r}^{-1})x_3y_3=0.$$

It is easy to see that when $\alpha\ne \pm 1$, $x\otimes y\notin \im \varrho$ if (ii), (iii) or (iv) holds. Indeed, if (ii) holds, then
$$|c_1|^2(-2\al^2+\hat{r}^{-1}\al) +|c_3|^2(\hat{r}-\hat{r}^{-1})\al^2=0.$$
But when $\alpha\ne \pm 1$, the two complex numbers $(-2\al^2+\hat{r}^{-1}\al)$ and $(\hat{r}-\hat{r}^{-1})\al^2$ are
linearly independent over $\mathbb{R}$. Therefore $c_1=c_3=0$. The arguments for the cases (iii) and (iv) are similar.
Hence if $x\otimes y\in \im \varrho$ and $\bx\otimes y\in \im \rhog$, then $x=0$ or $y=0$.
Therefore $\varrho$ above is a PPT entangled edge state of corank one.

%%%%%%%%%%%%%%%%%%%%%%%%%%%%%%%%%%%%%%%%%%%%%%%%%%%%%%%%%%%%%%%%%%%%%%%%%%%%%%%%%%%%%%%%%%%%%%%%%%%%%%%%%%%%%%%%%%%%%%%%%%%%%%%%%%%%%%%%
%%%%%%%%%%%%%%%%%%%%%%%%%%%%%%%%%%%%%%%%%%%%%%%%%%%%%%%%%%%%%%%%%%%%%%%%%%%%%%%%%%%%%%%%%%%%%%%%%%%%%%%%%%%%%%%%%%%%%%%%%%%%%%%%%%%%%%%%
%%%%%%%%%%%%%%%%%%%%%%%%%%%%%%%%%%%%%%%%%%%%%%%%%%%%%%%%%%%%%%%%%%%%%%%%%%%%%%%%%%%%%%%%%%%%%%%%%%%%%%%%%%%%%%%%%%%%%%%%%%%%%%%%%%%%%%%%
%%%%%%%%%%%%%%%%%%%%%%%%%%%%%%%%%%%%%%%%%%%%%%%%%%%%%%%%%%%%%%%%%%%%%%%%%%%%%%%%%%%%%%%%%%%%%%%%%%%%%%%%%%%%%%%%%%%%%%%%%%%%%%%%%%%%%%%%
%%%%%%%%%%%%%%%%%%%%%%%%%%%%%%%%%%%%%%%%%%%%%%%%%%%%%%%%%%%%%%%%%%%%%%%%%%%%%%%%%%%%%%%%%%%%%%%%%%%%%%%%%%%%%%%%%%%%%%%%%%%%%%%%%%%%%%%%
%%%%%%%%%%%%%%%%%%%%%%%%%%%%%%%%%%%%%%%%%%%%%%%%%%%%%%%%%%%%%%%%%%%%%%%%%%%%%%%%%%%%%%%%%%%%%%%%%%%%%%%%%%%%%%%%%%%%%%%%%%%%%%%%%%%%%%%%
%%%%%%%%%%%%%%%%%%%%%%%%%%%%%%%%%%%%%%%%%%%%%%%%%%%%%%%%%%%%%%%%%%%%%%%%%%%%%%%%%%%%%%%%%%%%%%%%%%%%%%%%%%%%%%%%%%%%%%%%%%%%%%%%%%%%%%%%
\section{$4\otimes 4$ edge states of corank one}\label{4x4}

Let $\al_i, \beta_i\in \CC$ with $|\alpha_i|=|\beta_i|=1$ for $1\le i\le 4$.
As above, we put $\al_{i,j}=\al_i^{-1}\al_j$ and $\beta_{i,j}=\beta_i^{-1}\beta_j$
and assume $\al_{i,j}\ne \beta_{i,j}$ for $1\le i<j\le 4$.

Let $\rhog$ be the $16\times 16$ matrix defined as follows:
\begin{enumerate}
\item The $(e_{12}, e_{21})$-principal submatrix is
$P_2(\al_{1,2})$.
\item The $(e_{13}, e_{31})$-principal submatrix is
$2P_2(\al_{1,3})$.
\item The $(e_{14}, e_{23}, e_{32}, e_{41})$-principal submatrix is
$P_4(1,\al_{2,3},\al_{1,4})$.
\item The $(e_{24}, e_{42})$-principal submatrix is
$2P_2(\beta_{2,4})$.
\item The $(e_{34}, e_{43})$-principal submatrix is
$P_2(\beta_{3,4})$.
\item The $(e_{11},e_{22},e_{33}, e_{44})$-principal submatrix is $rI_4$ for $r>1$ to be determined later.
\item All the other entries are zero.
\end{enumerate}

Namely, $\rhog$ is given by
$$
\begin{scriptsize}
\left(\begin{array}{cccc|cccc|cccc|cccc}
r& \cdot& \cdot& \cdot& \cdot& \cdot& \cdot& \cdot& \cdot& \cdot& \cdot& \cdot& \cdot& \cdot& \cdot& \cdot\\
\cdot& 1& \cdot& \cdot& \al_{1,2}& \cdot& \cdot& \cdot& \cdot& \cdot& \cdot& \cdot& \cdot& \cdot& \cdot& \cdot\\
\cdot& \cdot& 2& \cdot& \cdot& \cdot& \cdot& \cdot& 2\al_{1,3}& \cdot& \cdot& \cdot& \cdot& \cdot& \cdot& \cdot\\
\cdot& \cdot& \cdot& 2& \cdot& \cdot& 1& \cdot& \cdot& \cdot& \cdot& \cdot& \al_{1,4}& \cdot& \cdot& \cdot\\
\hline
\cdot& \al_{1,2}^{-1}& \cdot& \cdot& 1& \cdot& \cdot& \cdot& \cdot& \cdot& \cdot& \cdot& \cdot& \cdot& \cdot& \cdot\\
\cdot& \cdot& \cdot& \cdot& \cdot& r& \cdot& \cdot& \cdot& \cdot& \cdot& \cdot& \cdot& \cdot& \cdot& \cdot\\
\cdot& \cdot& \cdot& 1& \cdot& \cdot& 2& \cdot& \cdot& \al_{2,3}& \cdot& \cdot& \cdot& \cdot& \cdot& \cdot\\
\cdot& \cdot& \cdot& \cdot& \cdot& \cdot& \cdot& 2& \cdot& \cdot& \cdot& \cdot& \cdot& 2\beta_{2,4}& \cdot& \cdot\\
\hline
\cdot& \cdot& 2\al_{1,3}^{-1}& \cdot& \cdot& \cdot& \cdot& \cdot& 2& \cdot& \cdot& \cdot& \cdot& \cdot& \cdot& \cdot\\
\cdot& \cdot& \cdot& \cdot& \cdot& \cdot& \al_{2,3}^{-1}& \cdot& \cdot& 2& \cdot& \cdot& \al_{2,3}^{-1}\al_{1,4}& \cdot& \cdot& \cdot\\
\cdot& \cdot& \cdot& \cdot& \cdot& \cdot& \cdot& \cdot& \cdot& \cdot& r& \cdot& \cdot& \cdot& \cdot& \cdot\\
\cdot& \cdot& \cdot& \cdot& \cdot& \cdot& \cdot& \cdot& \cdot& \cdot& \cdot& 1& \cdot& \cdot& \beta_{3,4}& \cdot\\
\hline
\cdot& \cdot& \cdot& \al_{1,4}^{-1}& \cdot& \cdot& \cdot& \cdot& \cdot& \al_{2,3}\al_{1,4}^{-1}& \cdot& \cdot& 2& \cdot& \cdot& \cdot\\
\cdot& \cdot& \cdot& \cdot& \cdot& \cdot& \cdot& 2\beta_{2, 4}^{-1}& \cdot& \cdot& \cdot& \cdot& \cdot& 2& \cdot& \cdot\\
\cdot& \cdot& \cdot& \cdot& \cdot& \cdot& \cdot& \cdot& \cdot& \cdot& \cdot& \beta_{3, 4}^{-1}& \cdot& \cdot& 1& \cdot\\
\cdot& \cdot& \cdot& \cdot& \cdot& \cdot& \cdot& \cdot& \cdot& \cdot& \cdot& \cdot& \cdot& \cdot& \cdot& r
\end{array}\right)
\end{scriptsize}
$$
which is a $4\times 4$ matrix whose entries are also $4\times 4$ matrices.

By construction, it is easy to see that $\rhog$ is positive \semi of corank 5.
The partial transpose $\varrho$ is
$$
\begin{scriptsize}
\varrho=\left(\begin{array}{cccc|cccc|cccc|cccc}
r&\cdot&\cdot&\cdot&\cdot&\al_{1,2}^{-1}&\cdot&\cdot&\cdot&\cdot&2\al_{1,3}^{-1}&\cdot&\cdot&\cdot&\cdot&\al_{1,4}^{-1}\\
\cdot&1&\cdot&\cdot&\cdot&\cdot&\cdot&\cdot&\cdot&\cdot&\cdot&\cdot&\cdot&\cdot&\cdot&\cdot\\
\cdot&\cdot&2&\cdot&\cdot&\cdot&\cdot&1&\cdot&\cdot&\cdot&\cdot&\cdot&\cdot&\cdot&\cdot\\
\cdot&\cdot&\cdot&2&\cdot&\cdot&\cdot&\cdot&\cdot&\cdot&\cdot&\cdot&\cdot&\cdot&\cdot&\cdot\\
\hline
\cdot&\cdot&\cdot&\cdot&1&\cdot&\cdot&\cdot&\cdot&\cdot&\cdot&\cdot&\cdot&\cdot&\cdot&\cdot\\
\al_{1,2}&\cdot&\cdot&\cdot&\cdot&r&\cdot&\cdot&\cdot&\cdot&\al_{2,3}^{-1}&\cdot&\cdot&\cdot&\cdot&2\beta_{2,4}^{-1}\\
\cdot&\cdot&\cdot&\cdot&\cdot&\cdot&2&\cdot&\cdot&\cdot&\cdot&\cdot&\cdot&\cdot&\cdot&\cdot\\
\cdot&\cdot&1&\cdot&\cdot&\cdot&\cdot&2&\cdot&\cdot&\cdot&\cdot&\cdot&\cdot&\cdot&\cdot\\
\hline
\cdot&\cdot&\cdot&\cdot&\cdot&\cdot&\cdot&\cdot&2&\cdot&\cdot&\cdot&\cdot&\al_{2,3}\al_{1,4}^{-1}&\cdot&\cdot\\
\cdot&\cdot&\cdot&\cdot&\cdot&\cdot&\cdot&\cdot&\cdot&2&\cdot&\cdot&\cdot&\cdot&\cdot&\cdot\\
2 \al_{1,3}&\cdot&\cdot&\cdot&\cdot&\al_{2,3}&\cdot&\cdot&\cdot&\cdot&r&\cdot&\cdot&\cdot&\cdot&\beta_{3,4}^{-1}\\
\cdot&\cdot&\cdot&\cdot&\cdot&\cdot&\cdot&\cdot&\cdot&\cdot&\cdot&1&\cdot&\cdot&\cdot&\cdot\\
\hline
\cdot&\cdot&\cdot&\cdot&\cdot&\cdot&\cdot&\cdot&\cdot&\cdot&\cdot&\cdot&2&\cdot&\cdot&\cdot\\
\cdot&\cdot&\cdot&\cdot&\cdot&\cdot&\cdot&\cdot&\al_{2,3}^{-1}\al_{1,4}&\cdot&\cdot&\cdot&\cdot&2&\cdot&\cdot\\
\cdot&\cdot&\cdot&\cdot&\cdot&\cdot&\cdot&\cdot&\cdot&\cdot&\cdot&\cdot&\cdot&\cdot&1&\cdot\\
\al_{1,4}&\cdot&\cdot&\cdot&\cdot&2 \beta_{2,4}&\cdot&\cdot&\cdot&\cdot&\beta_{3,4}&\cdot&\cdot&\cdot&\cdot&r
\end{array}\right).\end{scriptsize}
$$

The matrix $\varrho$ is the direct sum of the following:
\begin{enumerate}
\item the $(e_{12}, e_{21},e_{34},e_{43})$-principal submatrix is the $4\times 4$ identity matrix $I_4$;
\item the $(e_{14}, e_{23}, e_{32},e_{41})$-principal submatrix is  $2I_{4}$;
\item the $(e_{13},e_{24})$-principal submatrix is the positive definite matrix $Q_{2}(1)$;
\item the $(e_{31},e_{42})$-principal submatrix is the positive definite matrix $Q_{2}(\al_{2,3}\al_{1,4}^{-1})$;
\item the $(e_{11},e_{22}, e_{33},e_{44})$-principal submatrix is the hermitian matrix
\[D_4^{\al,\beta}(r)= \left(\begin{array}{cccc}
r&\al_{1,2}^{-1}&2\al_{1,3}^{-1}&\al_{1,4}^{-1}\\
\al_{1,2}&r&\al_{2,3}^{-1}&2\beta_{2,4}^{-1}\\
2 \al_{1,3}&\al_{2,3}&r&\beta_{3,4}^{-1}\\
\al_{1,4}&2 \beta_{2,4}&\beta_{3,4}&r
\end{array}\right). \]
\end{enumerate}
Therefore, the matrix $\varrho$ is positive \semi of corank one if and only if $D_4^{\al,\beta}(r)$ is positive \semi of corank one.

Now, we choose the parameters $\al$ and $\beta$ by
\[ (\alpha_1,\alpha_2,\alpha_3,\alpha_4) = (1, e^{\frac{\pi i }{3}}, e^{\frac{2\pi i }{3}}, -1),
\qquad
(\beta_1,\beta_2,\beta_3,\beta_4)=(1,1,1,1).\]
Then we have
\[D_4^{\al,\beta}(r)= \left(\begin{array}{cccc}
r                    &e^{-\frac{\pi i}{3}} &2e^{-\frac{2\pi i}{3}}&-1\\
e^{\frac{\pi i }{3}} &r                    & e^{-\frac{\pi i}{3}} & 2\\
2e^{\frac{2\pi i}{3}}&e^{\frac{\pi i }{3}} & r                    &1\\
-1                   &2                    & 1                    &r
\end{array}\right). \]
We also choose $r$ to be the largest root $\hat{r}$ of the equation
\beq\label{eq:char4}\det(D_4^{\al,\beta}(r))=  r^4- 12 r^2 +6 r+17=0.\eeq
It is easy to see that $\hat{r}$ is a simple zero of \eqref{eq:char4}.
Therefore $D_4^{\al,\beta}(\hat{r})$ is positive \semi of corank 1.
By direct computation, we see that the kernel of $D_4^{\al,\beta}(\hat{r})$ is generated by
the vector
\[
(2 \hat{r}^2 - 4 i \sqrt{3} \hat{r} -5 + 5 i \sqrt{3} ,\ - 4 \hat{r}^2-  2 i \sqrt{3} \hat{r} + 16 + 4i \sqrt{3} ,\
- 2 \hat{r}^2 + 4 \hat{r}-3 - 3 i \sqrt{3} ,\  2 \hat{r}^3 - 12 \hat{r} + 8 ). \]
Hence $x\otimes y \in \im \varrho$ if and only if the vector $(x_1y_1, x_2y_2, x_3y_3, x_4y_4)$ is orthogonal
to this vector.

By definition of $\rhog$, one can see that $\bx\otimes y\in \im \rhog$
if and only if $v_i=(\bx_i,y_i)$ satisfy Lemma \ref{1.6}. Now, we can numerically check that
if $\bx\otimes y\in \im \rhog$ and $x\otimes y \in \im \varrho$, then $x=0$ or $y=0$.
Therefore, we conclude that $\varrho$ is a PPT entangled edge state of corank one.

%%%%%%%%%%%%%%%%%%%%%%%%%%%%%%%%%%%%%%%%%%%%%%%%%%%%%%%%%%%%%%%%%%%%%%%%%%%%%%%%%%%%%%%%%%%%%%%%%%%%%%%%%%%%%%%%%%%%%%%%%%%%%%%%%%%%%%%%
%%%%%%%%%%%%%%%%%%%%%%%%%%%%%%%%%%%%%%%%%%%%%%%%%%%%%%%%%%%%%%%%%%%%%%%%%%%%%%%%%%%%%%%%%%%%%%%%%%%%%%%%%%%%%%%%%%%%%%%%%%%%%%%%%%%%%%%%
%%%%%%%%%%%%%%%%%%%%%%%%%%%%%%%%%%%%%%%%%%%%%%%%%%%%%%%%%%%%%%%%%%%%%%%%%%%%%%%%%%%%%%%%%%%%%%%%%%%%%%%%%%%%%%%%%%%%%%%%%%%%%%%%%%%%%%%%
%%%%%%%%%%%%%%%%%%%%%%%%%%%%%%%%%%%%%%%%%%%%%%%%%%%%%%%%%%%%%%%%%%%%%%%%%%%%%%%%%%%%%%%%%%%%%%%%%%%%%%%%%%%%%%%%%%%%%%%%%%%%%%%%%%%%%%%%
%%%%%%%%%%%%%%%%%%%%%%%%%%%%%%%%%%%%%%%%%%%%%%%%%%%%%%%%%%%%%%%%%%%%%%%%%%%%%%%%%%%%%%%%%%%%%%%%%%%%%%%%%%%%%%%%%%%%%%%%%%%%%%%%%%%%%%%%
%%%%%%%%%%%%%%%%%%%%%%%%%%%%%%%%%%%%%%%%%%%%%%%%%%%%%%%%%%%%%%%%%%%%%%%%%%%%%%%%%%%%%%%%%%%%%%%%%%%%%%%%%%%%%%%%%%%%%%%%%%%%%%%%%%%%%%%%
%%%%%%%%%%%%%%%%%%%%%%%%%%%%%%%%%%%%%%%%%%%%%%%%%%%%%%%%%%%%%%%%%%%%%%%%%%%%%%%%%%%%%%%%%%%%%%%%%%%%%%%%%%%%%%%%%%%%%%%%%%%%%%%%%%%%%%%%
\section{$n\otimes n$ edge states of corank one for $n\ge 3$}\label{nxn}

We generalize the above construction for any $n\ge 3$.
We fix $\al_1=1$, $\beta_n=1$ and for $2\le i\le n$, let
$\al_{i}, \beta_i \in \CC$ with $|\al_i|=1$ and $|\beta_i|=1$. Let
$$\al_{i,j}=\al_i^{-1}\al_j,\quad  \beta_{i,j}=\beta_i^{-1}\beta_j$$
as before, and we assume $\al_{i,j}\ne \beta_{i,j}$ for $1\le i<j\le n$ as in Lemma \ref{1.6}.

Let $\rhog$ be the $n^2\times n^2$ matrix defined as follows:
\begin{enumerate}[leftmargin=30pt]
\item The $(e_{12}, e_{21})$-principal submatrix is
$P_2(\al_{1,2})$.
\item The $(e_{13}, e_{31})$-principal submatrix is
$2P_2(\al_{1,3})$.
\item The $(e_{1,k}, e_{2,k-1}, e_{3,k-2},\cdots,e_{\lfloor \frac{k}2\rfloor,k+1-\lfloor\frac{k}2\rfloor},
e_{k+1-\lfloor\frac{k}2\rfloor,\lfloor\frac{k}2\rfloor},e_{k+2-\lfloor \frac{k}2\rfloor,\lfloor \frac{k}2\rfloor-1},\cdots, e_{k,1})$-principal
submatrix is
$$P_{2\lfloor \frac{k}2\rfloor}(1,1,\cdots,1,
\al_{\lfloor \frac{k}2\rfloor,k+1-\lfloor\frac{k}2\rfloor},\al_{\lfloor \frac{k}2\rfloor-1,k+2-\lfloor\frac{k}2\rfloor},\cdots, \al_{1,k})$$
for $4\le k\le n$.
\item The $(e_{l,n}, e_{l+1,n-1},\cdots, e_{l+\lfloor \frac{n-l+1}2\rfloor -1,n+1-\lfloor \frac{n-l+1}2\rfloor},
e_{n+1-\lfloor \frac{n-l+1}2\rfloor,l+\lfloor \frac{n-l+1}2\rfloor -1},\cdots, e_{n,l})$-principal submatrix is
$$P_{2\lfloor \frac{n-l+1}2\rfloor}(1,1,\cdots,1,\beta_{l+\lfloor \frac{n-l+1}2\rfloor -1,n+1-\lfloor \frac{n-l+1}2\rfloor},
\beta_{l+\lfloor \frac{n-l+1}2\rfloor -2,n+2-\lfloor \frac{n-l+1}2\rfloor},\cdots,\beta_{l,n})$$ for $2\le l\le n-3$.
\item The $(e_{n-2,n}, e_{n,n-2})$-principal submatrix is
$2P_2(\beta_{n-2,n})$ for $n>3$.
\item The $(e_{n-1,n}, e_{n,n-1})$-principal submatrix is
$P_2(\beta_{n-1,n})$.
\item The $(e_{11},e_{22},e_{33}, \cdots,e_{nn})$-principal submatrix is $rI_n$ for $r>1$ to be determined later.
\item All the other entries are zero.
\end{enumerate}

The matrix $\rhog$ is positive \semi of corank $2n-3$ such that $\bx\otimes y\in \im \rhog$
if and only if $v_i=(\bx_i,y_i)$ satisfy Lemma \ref{1.6}.
It is easy to check that the partial transpose $\varrho$ of $\rhog$ is the direct sum of the following:
\begin{enumerate}[leftmargin=30pt]
\item The $(e_{12}, e_{21},e_{n-1,n},e_{n,n-1})$-principal submatrix is the $4\times 4$ identity matrix $I_4$;
\item The $(e_{1,n}, e_{23}, e_{32},e_{34},e_{43},\cdots, e_{n-2,n-1},e_{n-1,n-2},e_{n,1})$-principal submatrix is  $2I_{2n-4}$;
\item For $3\le k<n$, the $(e_{1,k},e_{2,k+1},e_{3,k+2},\cdots,e_{n-k+1,n})$-principal submatrix
is the positive definite matrix $$Q_{n-k+1}(1,1,\cdots,1);$$
\item For $3\le k<n$, the $(e_{k,1},e_{k+1,2},e_{k+2,3},\cdots,e_{n,n-k+1})$-principal submatrix
is the positive definite matrix
\begin{align*}
Q_{n-k+1}(&\al_{2,k}\al_{1,1+k}^{-1}, \al_{3,k+1}\al_{2,2+k}^{-1},\cdots,
\al_{\lfloor \frac{n-k+3}2\rfloor,\lfloor \frac{n+k-1}2\rfloor}\al_{\lfloor \frac{n-k+1}2\rfloor,\lfloor \frac{n+k+1}2\rfloor}^{-1}, \\
&\beta_{\lfloor \frac{n-k+3}2\rfloor+1,\lfloor \frac{n+k-1}2\rfloor+1}
\beta_{\lfloor \frac{n-k+1}2\rfloor+1,\lfloor \frac{n+k+1}2\rfloor+1}^{-1}, \cdots,
\beta_{n-k,n-2}\beta_{n-k-1,n-1}^{-1},\beta_{n-k+1,n-1}\beta_{n-k,n}^{-1});
\end{align*}
\item The $(e_{11},e_{22}, \cdots,e_{nn})$-principal submatrix is
the $n\times n$ hermitian matrix $D_n^{\al,\beta}(r)$ defined as follows:
\begin{enumerate}
\item The first column is $(r, \al_{2},2\al_{3},\al_{4}, \al_{5},\cdots,\al_{n})$;
\item The first row is $(r,\bal_{2},2\bal_{3},\bal_{4}, \bal_{5},\cdots,\bal_{n})$;
\item The last column is $(\bal_{n},\beta_2,\beta_3,\cdots,\beta_{n-3},2\beta_{n-2},\beta_{n-1},r)$;
\item The last row is $(\al_n,\bar\beta_2,\bar\beta_3,\cdots,\bar\beta_{n-3}, 2\bar\beta_{n-2},\bar\beta_{n-1},r)$;
\item The diagonal entries (i.e. $(i,j)$th entries with $|i-j|=0$) are all $r$;
\item The $(i,j)$th entries with $|i-j|=1$ and $i+j\le n+1$ are $\al_{i,j}^{-1}=\al_i\bal_j$;
\item The $(i,j)$th entries with $|i-j|=1$ and $i+j> n+1$ are $\beta_{i,j}^{-1}=\beta_i\bar\beta_j$;
\item The $(i,j)$th entries with $|i-j|=2$ and $i+j\le n+1$ are $\al_{i,j}^{-1}$ except the $(1,3)$, $(3,1)$th entries;
\item The $(i,j)$th entries with $|i-j|=2$ and $i+j> n+1$ are $\beta_{i,j}^{-1}$ except the $(n-2,n)$, $(n,n-2)$th entries;
\item All other entries are 0.
\end{enumerate}
\end{enumerate}

Note that the matrix $D_n^{\al,\beta}(r)$ looks like
$$
\left(
\begin{matrix}
r        &\bar\al_2      &2\bar\al_3     &\bar\al_4       &\bar\al_5       &\cdots    &\cdots     &\bar\al_{n-1}     &\bar\al_n\\
\al_2    &r              &\bar\al_3\al_2 &\bar\al_4\al_2  &0               &\cdots    &\cdots     &0                 &\beta_2\\
2\al_3   &\bar\al_2\al_3 &r              &\bar\al_4\al_3  &\bar\al_5\al_2  &\ddots    &           &0                 &\beta_3\\
\al_4    &\bar\al_2\al_4 &\bar\al_3\al_4 &r               &\bar\al_5\al_4  &          &\ddots     &0                 &\beta_4\\
\al_5    &0              &\bar\al_3\al_5 &\bar\al_4\al_5  &r               &          &           &0                 &\beta_5\\
\vdots   &\vdots         &\ddots         &                &                &\ddots    &           &                  &\vdots\\
\vdots   &\vdots         &               &\ddots          &                &          &r          &\bar\beta_{n-1}\beta_{n-2}&2\beta_{n-2}\\
\al_{n-1}&0              &0              &0               &0               &          &\bar\beta_{n-2}\beta_{n-1}  &r      &\beta_{n-1}\\
\al_n    &\bar\beta_2    &\bar\beta_3    &\bar\beta_4     &\bar\beta_5     &\cdots    &2\bar\beta_{n-2}&\bar\beta_{n-1}&r
\end{matrix}
\right).
$$

%%%%%%%%%%%%%%%%%%%%%%%%%%%%%%%%%%%%%%%%%%%%%%%%%%%%%%%%%%%%%%%%%%%%%%%%%%%%%%%%%%%%%%%%%%%%%%%%%%

Hence, $\varrho$ is a positive \semi matrix of corank one if and only if $D_n^{\al,\beta}(r)$ is such a matrix.
As before, we take $r$ to be the largest root $\hat{r}$ of the polynomial $\mathrm{det}D_n^{\al,\beta}(r)$.
If furthermore $\hat{r}$ is a simple root of $\mathrm{det}D_n^{\al,\beta}(r)$, then $D_n^{\al,\beta}(\hat{r})$
is a positive \semi matrix of corank one. It now amounts to checking the following to construct an edge state of corank one.

\begin{proposition}\label{prop:orth}
Let $\al_i$ and $\beta_i$ be as above and assume that the largest root $\hat{r}$ of $\mathrm{det}D_n^{\al,\beta}(r)=0$
is a simple root. Let $w=(w_1, \cdots, w_n)\in \mathbb{C}^n$ be a nonzero vector in the kernel of $D_n^{\al,\beta}(\hat{r})$.
Then the matrix $\varrho$ constructed above is an edge state of corank one if
\begin{itemize}[leftmargin=20pt]
\item[{\rm ($\star$)}]
the following vectors are \emph{not} orthogonal to $w$:
\begin{enumerate}
\item $(u_1, u_2\al_2, \cdots, u_q\al_{q},0,\cdots,0)$ where $q\le \frac{n+1}{2}$ and $u_i\ge 0$
for all $i$ but not all zero;
\item $(0,\cdots,0,u_p\al_{p},u_{p+1}\al_{p+1},\cdots,u_{n-q+1}\al_{n-q+1},0,0,\cdots,0,u_{q}\al_{q},0,\cdots,0)$
where $u_i\ge 0$ for all $i$ and $u_p\ne 0$, $u_{q}\ne 0$ with $p\le n-q+1 < \frac{n+1}2$;
\item $(0,\cdots, 0, u_{p}\beta_{p},0,\cdots,0, u_{n-p+2}\beta_{n-p+2},u_{n-p+3}\beta_{n-p+3},\cdots,u_{q}\beta_{q},0,\cdots, 0)$
where $u_i\ge 0$ for all $i$ and $u_{p}\ne 0$, $u_{q}\ne 0$ with $\frac{n+3}2 < n-p+2\le q$;
\item $(0,\cdots,0,u_{p}\beta_{p},\cdots,u_n\beta_n)$ where $u_i\ge 0$ for all $i$ but not all zero and $p \ge \frac{n+1}{2}$.
\end{enumerate} Here $u_i$ is at the $i$th place.
\end{itemize}
\end{proposition}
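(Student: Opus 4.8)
The plan is to convert each of the two membership conditions into an explicit statement about the coordinates of $x$ and $y$, and then play them against each other through Lemma \ref{1.6} and the hypothesis ($\star$). The target is to show that $x\otimes y\in\im\varrho$ together with $\bx\otimes y\in\im\rhog$ forces $x=0$ or $y=0$, which is precisely the edge property for the corank-one state $\varrho$.

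First I would identify $\im\varrho$. In the direct sum decomposition of $\varrho$ above, every summand other than $D_n^{\al,\beta}(\hat r)$ is positive definite: the identity block $I_4$, the block $2I_{2n-4}$, and the $Q$-blocks $Q_{n-k+1}(\cdots)$. Since $\hat r$ is a simple root of $\det D_n^{\al,\beta}(r)$, the remaining summand $D_n^{\al,\beta}(\hat r)$ has corank one, so $\ker\varrho$ is the line spanned by $\sum_{i=1}^n w_ie_{ii}$. Therefore $x\otimes y\in\im\varrho=(\ker\varrho)^\perp$ if and only if $(x_1y_1,\dots,x_ny_n)$ is orthogonal to $w$.

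Next I would record the dictionary for $\rhog$. By construction, $\bx\otimes y\in\im\rhog$ holds exactly when the pairs $v_i=(\bx_i,y_i)$ satisfy one of the cases (i)--(v) of Lemma \ref{1.6}. Case (i) is the desired conclusion outright. In the remaining cases one has $v_i=(c_it,c_i\gamma_i)$ on the appropriate index range, with $\gamma_i\in\{\al_i,\beta_i\}$ depending on the case and $t\ne0$; hence $x_i=\overline{c_it}$, $y_i=c_i\gamma_i$, and $x_iy_i=|c_i|^2\,\bar t\,\gamma_i$. Putting $u_i=|c_i|^2\ge0$, the vector $(x_1y_1,\dots,x_ny_n)$ equals $\bar t$ times a vector of exactly the form listed in ($\star$): case (ii) gives the vector in ($\star$)(1) (using $\al_1=1$), case (v) gives ($\star$)(4) (using $\beta_n=1$), and cases (iii),(iv) give ($\star$)(2),($\star$)(3), the constraints $c_p\ne0$, $c_q\ne0$ becoming $u_p\ne0$, $u_q\ne0$.

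Finally I would combine the two facts, using that $t\ne0$ and that orthogonality to $w$ is invariant under the nonzero scalar $\bar t$. In cases (ii) and (v), if some $u_i\ne0$ then ($\star$) says the associated vector is not orthogonal to $w$, contradicting $x\otimes y\in\im\varrho$; hence all $u_i=0$, so every $v_i=0$ and $x=y=0$. In cases (iii) and (iv) the forced nonvanishing $u_p\ne0$, $u_q\ne0$ places the vectors in the scope of ($\star$)(2), ($\star$)(3), which again contradicts orthogonality to $w$, so these cases cannot occur at all. This exhausts Lemma \ref{1.6} and proves the proposition. The argument is a bookkeeping reduction once the two dictionaries are in place; the only delicate points are the conjugation identity $x_iy_i=|c_i|^2\bar t\,\gamma_i$ and matching the index supports of the five Lemma \ref{1.6} cases to the four patterns of ($\star$), since ($\star$) has been designed precisely so that these patterns align.
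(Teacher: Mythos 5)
Your proposal is correct and takes essentially the same approach as the paper: the paper's own proof is a one-line remark that the claim ``is straightforward from the construction of $\varrho^\Gamma$ and $\varrho$ and Lemma \ref{1.6}, with $u_j=|c_j|^2$,'' and your argument simply supplies those details---the identification of $\ker\varrho$ with the span of $\sum_i w_ie_{ii}$, the conjugation dictionary $x_iy_i=|c_i|^2\,\bar t\,\gamma_i$, and the matching of cases (ii)--(v) of Lemma \ref{1.6} with patterns (1)--(4) of ($\star$). There is nothing to correct.
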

\begin{proof}
It is straightforward from the construction of $\rhog$ and $\varrho$ above and Lemma \ref{1.6}. Here,
$u_j=|c_j|^2$ where $c_j$ are from Lemma \ref{1.6}.
\end{proof}

To check that the vector $w$ satisfies ($\star$), we may use the following lemma whose proof is straightforward and omitted.
\begin{lemma}\label{check}
For $z_1, \cdots, z_m\in \mathbb{C}$, there do not exist nonnegative (not all zero) real numbers $u_j$ for $j=1, \cdots, m$
such that $\sum_{j=1}^m u_jz_j=0$ if and only if $z_1, \cdots, z_m$ belong to a half-plane on the complex plane, i.e.
there exists a nonzero $h\in \mathbb{C}$ such that $\re(\bar{z}_jh)>0$ for all $j$, or equivalently, $z_j\ne 0$ for all $j$ and either
\[ \max_{j} \mathrm{Arg}(z_j) - \min_j \mathrm{Arg}(z_j) <\pi \quad\text{or}\quad \max_{j} \mathrm{Arg}(-z_j) - \min_j \mathrm{Arg}(-z_j) <\pi\]
holds, where $\mathrm{Arg}(z)$ is the principal value of the argument of $z$ with $-\pi< \mathrm{Arg}(z)\le\pi$.
\end{lemma}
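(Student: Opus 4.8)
The statement is a chain of two equivalences,
\[
\bigl(\text{no nontrivial } u\ge 0 \text{ with } \textstyle\sum_j u_j z_j=0\bigr) \iff \bigl(\exists\, h\neq 0:\ \re(\bar z_j h)>0\ \forall j\bigr)\iff \bigl(\text{the Arg conditions}\bigr),
\]
and I would establish each link separately. Throughout I identify $\mathbb C$ with $\mathbb R^2$, under which $\re(\bar z_j h)=\langle z_j,h\rangle$ is the ordinary real inner product; thus $\re(\bar z_j h)>0$ for all $j$ says precisely that all the $z_j$ lie in the open half-plane with inward normal $h$.

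For the first equivalence the direction $(\Leftarrow)$ is immediate: if such an $h$ exists and $u_j\ge 0$ are not all zero, then
\[
\re\!\left(\overline{\textstyle\sum_j u_j z_j}\;h\right)=\sum_j u_j\,\re(\bar z_j h)>0,
\]
so $\sum_j u_j z_j\neq 0$. For the converse I would argue by separation. If some $z_j=0$ then $u_j=1$ already gives a nontrivial vanishing combination, so the hypothesis forces every $z_j\neq 0$; write $p_j=z_j/|z_j|$ on the unit circle. A nontrivial nonnegative combination of the $z_j$ vanishes if and only if $0\in\conv\{p_1,\dots,p_m\}$, so the assumed nonexistence says $0\notin\conv\{p_j\}$. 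Since this convex hull is compact, the separating hyperplane theorem yields $h\neq 0$ and $c>0$ with $\langle p_j,h\rangle\ge c>0$ for all $j$, i.e. $\re(\bar z_j h)>0$ for all $j$. (Equivalently one may invoke Gordan's theorem of the alternative applied to the $2\times m$ real matrix whose columns are the $z_j$.)

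For the second equivalence, existence of $h$ already forces each $z_j\neq 0$ (otherwise $\re(\bar z_j h)=0$), so assume all $z_j\neq 0$ and set $\theta_j=\mathrm{Arg}(z_j)\in(-\pi,\pi]$. Writing $h=e^{i\phi}$, the condition $\re(\bar z_j h)>0$ reads $\cos(\phi-\theta_j)>0$, i.e. $\theta_j\in(\phi-\tfrac\pi2,\phi+\tfrac\pi2)$; hence $h$ exists iff the $\theta_j$ are covered by an open arc of length $\pi$. Such an arc cannot contain both antipodal directions $0$ and $\pi$ (they are at angular distance $\pi$), so it omits at least one of them. If it omits the direction $\pi$, the cut of $\mathrm{Arg}$ is avoided and the principal arguments genuinely satisfy $\max_j\theta_j-\min_j\theta_j<\pi$, which is condition (A); if it omits the direction $0$, then negating all vectors (rotation by $\pi$, which carries $h$ to $-h$) moves the obstruction to the direction $\pi$, and the same reasoning applied to $-z_j$ gives $\max_j\mathrm{Arg}(-z_j)-\min_j\mathrm{Arg}(-z_j)<\pi$, condition (B). Conversely, each of (A) and (B) confines the relevant arguments to a closed interval of length $<\pi$, which sits inside an open arc of length $\pi$, so a suitable $h$ (respectively $-h$) exists.

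The only genuinely nontrivial input is the separating-hyperplane step in the first equivalence, equivalently the standard fact that finitely many points on the circle fail to lie in an open half-circle exactly when the origin lies in their convex hull; everything else is direct computation. The step demanding the most care is the branch-cut bookkeeping in the second equivalence, namely matching the two principal-value conditions (A) and (B) to the two possible positions of the omitted antipodal direction, but this is routine once the arc formulation is in place.
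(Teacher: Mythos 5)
The paper does not actually contain a proof of this lemma: it is stated with the remark ``whose proof is straightforward and omitted,'' so there is no argument of the authors to compare yours against; the only question is whether your proof is sound, and it is. Your decomposition into two equivalences is the natural way to fill the gap. In the first equivalence, the reduction to the case $z_j\ne 0$ (a zero $z_j$ immediately yields a nontrivial vanishing combination), the passage to the normalized points $p_j=z_j/|z_j|$, the observation that a nontrivial nonnegative vanishing combination of the $z_j$ exists exactly when $0\in\conv\{p_1,\dots,p_m\}$, and the strict separation of $0$ from this compact convex set (or Gordan's alternative) are all correctly executed; compactness of the hull of finitely many points is what licenses the strict inequality $\langle p_j,h\rangle\ge c>0$, and you invoke it. In the second equivalence, the reformulation ``all directions $\theta_j$ lie in an open arc of length $\pi$'' is right, and the key bookkeeping step --- an open arc of length $\pi$ cannot contain both antipodal points $1$ and $e^{i\pi}$, so it omits one of them, and omitting $e^{i\pi}$ (the branch cut of $\mathrm{Arg}$) makes the arc map to an open interval of length $\pi$ in $(-\pi,\pi)$, giving condition (A), while omitting $1$ gives condition (B) after negation --- is exactly the point that needs care, and you handle it, including both converse directions. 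So your proof is complete and is presumably the ``straightforward'' argument the authors had in mind.
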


For example, to check the vectors in (i) of Proposition \ref{prop:orth} are not orthogonal to $w$,
we check that the complex numbers $w_1, w_2\bar{\al}_2, \cdots, w_q\bar{\al}_q$ belong to a half-plane.

Note that as $\hat{r}$ and $w$ are determined algebraically by parameters $\al_i$ and $\beta_i$, the condition ($\star$)
and $\hat{r}$ being a simple root are open conditions, i.e. if the conditions are satisfied for some $\al_i$ and $\beta_i$,
then they also hold for all $\al_i'$ and $\beta_i'$ sufficiently close to $\al_i$ and $\beta_i$ respectively.
So, the set of tuples $(\al_2, \cdots, \al_n, \beta_2, \cdots \beta_{n-1})$ which yield PPT entangled edge states
of corank one is an open subset of ${U(1)}^{2n-3}$ where $U(1)$ denotes the circle group $\{z\in \CC\,|\, |z|=1\}$.
Hence if nonempty, our construction produces a $(2n-3)$-dimensional  family of PPT entangled edge states of corank one.

A program for checking Proposition \ref{prop:orth} has been implemented in Mathematica (available upon request).
The algorithm proceeds as follows:
\medskip
\begin{enumerate}[label={\bf Step \arabic*}:,leftmargin=60pt]
\item Given $\al_i$ and $\beta_i$, find the largest root $\hat{r}$ of the equation $\mathrm{det}D_n^{\al,\beta}(r)=0$.
\item Check that the dimension of the kernel of $D_n^{\al,\beta}(\hat{r})$ is one.
\item Find a nonzero vector $w=(w_1, \cdots, w_n)\in \mathbb{C}^n$ in the kernel of $D_n^{\al,\beta}(\hat{r})$
and check that $w_i\ne 0$ for all $i=1, \cdots, n$.
\item Check that the vector $w$ satisfies ($\star$) using Lemma \ref{check}.
\end{enumerate}
\medskip

We have checked that the conditions in Proposition \ref{prop:orth} are satisfied for all $3\le n\le 1000$ if we let
\beq \label{param}
(\al_1, \cdots, \al_n)= (1, e^{\frac{\pi i}{4}},e^{\frac{\pi i}{4}},\cdots, e^{\frac{\pi i}{4}}, -1)\quad \text{and}\quad
(\beta_1, \cdots, \beta_n)=(1,1,\cdots, 1).
\eeq
As remarked in the previous paragraph, we may now perturb $\al_i$ and $\beta_i$ so that
$\al_{i,j}\ne\beta_{i,j}$ for $1\le i<j\le n$ and the conditions in Proposition \ref{prop:orth} are still satisfied.
For example, we have also checked that the perturbation of the form
$
(\al_1, \cdots, \al_n)= (1, e^{\pi i \left(\frac{1}{4}+\frac{1}{10000}\right)},e^{\pi i \left(\frac{1}{4}+\frac{2}{10000}\right)},
\cdots, e^{\pi i \left(\frac{1}{4}+\frac{n-2}{10000}\right)}, -1)$
and $
(\beta_1, \cdots, \beta_n)=(1,1,\cdots, 1)
$
gives a valid result up to $n=346$. Although the perturbation of this form fails when $n>346$, one can find $\al_i$
and $\beta_i$ close to \eqref{param} which gives a PPT entangled edge state.

Therefore, we have the following

\begin{theorem}\label{1.2}
For $3\le n \le 1000$, there is a PPT entangled edge state in $M_n\otimes M_n$ of bi-rank $(n^2-1,n^2-2n+3)$.
\end{theorem}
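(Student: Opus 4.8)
The plan is to reduce the theorem, through the machinery assembled in Section \ref{nxn}, to a finite computer verification at the single symmetric parameter point (\ref{param}), and then to pass to genuinely admissible parameters by an openness argument. I would begin from the two structural facts already in hand: the partial transpose $\rhog$ is positive semidefinite of corank $2n-3$, and $\varrho=(\rhog)^\Gamma$ is positive semidefinite of corank one precisely when its central block $D_n^{\al,\beta}(r)$ is, all the remaining direct summands being positive definite. Fixing $r=\hat r$, the largest root of $\det D_n^{\al,\beta}(r)=0$, the first computational task is to confirm that $\hat r$ is a \emph{simple} root and that $\ker D_n^{\al,\beta}(\hat r)$ is one-dimensional (Steps 1--2 of the algorithm), which forces $\varrho$ to have corank one.

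Next I would extract a kernel generator $w=(w_1,\dots,w_n)$ and test condition ($\star$) of Proposition \ref{prop:orth}. By Lemma \ref{check}, each of its four families of vectors is non-orthogonal to $w$ exactly when the corresponding scaled components --- for example $w_1,w_2\bal_2,\dots,w_q\bal_q$ in case (i), and analogous lists built from $\bal_i$ or $\beta_i$ in the other cases --- all lie in an open half-plane, i.e. their arguments span less than $\pi$. Running this half-plane test for every $n$ from $3$ to $1000$ at the parameters (\ref{param}), and granting that all tests pass, Proposition \ref{prop:orth} produces an edge state of corank one.

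One subtlety must be handled: the symmetric choice (\ref{param}) has $\al_{i,j}=1=\beta_{i,j}$ for interior indices, so it violates the standing hypothesis $\al_{i,j}\ne\beta_{i,j}$ used throughout Section \ref{bilinear-equation} and Proposition \ref{prop:orth}. I would resolve this by openness: simplicity of $\hat r$ and condition ($\star$) are both cut out by strict inequalities in $\hat r$ and $w$, which depend continuously on the $\al_i,\beta_i$, so a sufficiently small perturbation makes $\al_{i,j}\ne\beta_{i,j}$ hold for all $1\le i<j\le n$ while preserving both conditions. At such a perturbed point every hypothesis of Proposition \ref{prop:orth} is met, so $\varrho$ is a PPT entangled edge state of corank one; since $\varrho$ has corank one and $\rhog$ has corank $2n-3$, its bi-rank is $(n^2-1,\,n^2-2n+3)$.

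The main obstacle is to make the $n\le 1000$ verification a genuine proof rather than a floating-point observation. Because $\hat r$ and $w$ are algebraic over the field generated by the parameters --- here $\mathbb Q(i,\sqrt2)$, since $e^{\pi i/4}=(1+i)/\sqrt2$ --- the cleanest certification is exact symbolic arithmetic in that number field, or else verified interval arithmetic carrying an error margin provably below the observed angular gaps and root separations. I would not attempt a single argument valid for all $n$: since $\hat r$ and $w$ are read off an $n$-dependent matrix with no closed form, controlling the arguments of the $w_i\bal_i$ uniformly in $n$ is exactly what blocks the extension beyond $1000$, which is why the paper records that range only and leaves the general case as a conjecture.
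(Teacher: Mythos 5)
Your proposal is correct and follows essentially the same route as the paper's own proof: reduce everything to the corank-one condition on the central block $D_n^{\al,\beta}(\hat r)$, verify by computer that $\hat r$ is a simple root and that the kernel vector $w$ satisfies condition ($\star$) of Proposition \ref{prop:orth} via the half-plane test of Lemma \ref{check} at the parameters (\ref{param}) for $3\le n\le 1000$, and then invoke openness to perturb the parameters so that $\al_{i,j}\ne\beta_{i,j}$ holds while both conditions persist, yielding the bi-rank $(n^2-1,\,n^2-2n+3)$. Your extra remarks on certifying the numerics exactly (the paper just relies on a Mathematica run) and on why no uniform-in-$n$ argument is attempted are consistent with, and slightly more careful than, what the paper records.
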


We believe that the same construction with $\al_i$ and $\beta_i$ sufficiently close to \eqref{param} works in general.
We stopped at $n=1000$ because of the running time of the program. On an ordinary laptop computer with 2.9GHz processor,
it takes 3 minutes to check for $3\le n\le 200$, about an hour for $200< n\le 400$, about 3 hours for $400< n\le 600$, about 7 hours for $600< n\le 800$ and about 15 hours for $800< n\le 1000$. Based on this numerical evidence, we propose the following

\begin{conjecture}
For any $n\ge 3$, the open set of parameters $(\al_1, \cdots, \al_n, \beta_1, \cdots, \beta_n)$ which give
PPT entangled edge states of bi-rank $(n^2-1,n^2-2n+3)$ is nonempty.
\end{conjecture}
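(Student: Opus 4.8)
The conjecture asserts a single statement for \emph{all} $n\ge 3$, so the plan is to replace the per-$n$ numerical check underlying Proposition \ref{prop:orth} by a uniform analysis of the family $\{D_n^{\al,\beta}(r)\}_{n\ge 3}$ at the parameters \eqref{param}, using Theorem \ref{1.2} only to dispose of an initial range. Because simplicity of $\hat r$ and condition ($\star$) are \emph{open} in the parameters, it suffices to verify them at (or arbitrarily near) the base point \eqref{param}; openness then supplies nearby parameters with $\al_{i,j}\ne\beta_{i,j}$, for which Lemma \ref{1.6} applies and the construction genuinely yields an edge state, establishing nonemptiness of the valid open set. The starting observation is structural: at \eqref{param} the matrix $D_n^{\al,\beta}(r)$ is a \emph{bordered banded Toeplitz matrix}. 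Since $\beta_i=1$ and $\al_2=\cdots=\al_{n-1}=e^{\pi i/4}$, every interior band entry $\al_i\bal_j$ with $2\le i,j\le n-1$ collapses to a constant, so one may write $D_n^{\al,\beta}(r)=T_n(r)+R$, where $T_n(r)$ is a Hermitian banded Toeplitz matrix with constant diagonal $r$ and constant (mostly $1$, occasionally $2$) off-diagonals, and $R$ is a Hermitian correction of rank at most $4$ supported on the first and last rows and columns (the entries involving $\al_1=1$, $\al_n=-1$, and the doubled corner terms). This splits the problem into an exactly solvable bulk and a bounded-rank boundary.

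The first step is to obtain closed forms for $T_n(r)$. Being banded Toeplitz, a transfer-matrix (equivalently Chebyshev) computation gives $\det T_n(r)$ and all entries of $T_n(r)^{-1}$ explicitly in terms of the roots $\lambda_\pm(r)$ of the associated symbol equation, all as combinations of the geometric sequences $\lambda_\pm(r)^{\,i}$. Feeding this into the matrix determinant lemma reduces $\det D_n^{\al,\beta}(r)=\det T_n(r)\cdot\det\!\big(I+U^{*}T_n(r)^{-1}U\big)$ to a determinant of size at most $4$ whose entries are the Chebyshev expressions above. From this one locates the largest root $\hat r=\hat r_n$, proves it simple by showing $\tfrac{d}{dr}\det D_n^{\al,\beta}(\hat r_n)\ne 0$ via a sign/monotonicity argument, and shows that $\hat r_n$ is increasing and bounded, hence convergent to some $r_\infty>1$ in the regime where the $\lambda_\pm(\hat r_n)$ are determined explicitly.

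The second step is to make the kernel vector $w=(w_1,\dots,w_n)$ of $D_n^{\al,\beta}(\hat r_n)$ explicit. By the Woodbury identity the equation $D_n^{\al,\beta}(\hat r_n)w=0$ expresses each $w_i$ as a fixed linear combination of $T_n(\hat r_n)^{-1}$ applied to the border vectors, so $w_i$ becomes an explicit combination of $\lambda_\pm(\hat r_n)^{\,i}$ and $\lambda_\pm(\hat r_n)^{\,n-i}$ with coefficients read off from the $4\times4$ boundary data. In the regime $\hat r_n>1$ one symbol root dominates in modulus, and the facts to establish here are that $w_i\ne0$ for every $i$ (Step 3 of the algorithm) and that the argument of $w_i$ is governed, across the bulk, by a single dominant geometric mode.

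The final, and hardest, step is the uniform verification of ($\star$) via Lemma \ref{check}. For each of the four index families of Proposition \ref{prop:orth} one must show that the numbers $w_i\bal_i$ (for the $\al$-families) or $w_i$ (for the $\beta$-families, as $\beta_i=1$) lie in one open half-plane, i.e. their arguments span an arc strictly shorter than $\pi$. By the explicit form of $w_i$, these arguments are, apart from the fixed boundary phases $0$ and $\pi$ at $i=1,n$ (from $\al_1=1$, $\al_n=-1$) and the constant bulk phase $-\pi/4$ from $\bal_i=e^{-\pi i/4}$, controlled entirely by $\mathrm{Arg}\,\lambda(\hat r_n)$ of the dominant mode. \textbf{The main obstacle is to pin this argument down uniformly in $n$:} if $\lambda$ had argument $\theta\ne0$ the bulk phases would wind by $\approx q\theta$ over index ranges of length $q\sim n/2$ and leave every half-plane, while if $\lambda$ were negative the $w_i$ would alternate in sign and the two resulting phases would differ by exactly $\pi$, again violating the strict inequality of Lemma \ref{check}. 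The whole argument therefore hinges on a precise determination of $\mathrm{Arg}\,\lambda(\hat r_n)$ — delicate because the doubled entries and the second-neighbour couplings in $D_n^{\al,\beta}$ raise the order of the bulk recurrence and perturb the naive symbol — together with a proof that the subdominant contributions never drive the relevant phases across the half-plane boundary. Once such a bound is secured for all $n\ge N_0$ with an explicit $N_0\le 1000$, Theorem \ref{1.2} covers $3\le n\le N_0$, and the two ranges together prove the conjecture for every $n\ge 3$.
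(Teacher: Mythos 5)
First, a point of calibration: the paper does \emph{not} prove this statement. It is posed as a conjecture, supported only by Theorem \ref{1.2}, which the authors obtained by running Steps 1--4 of their algorithm numerically at the parameters \eqref{param} for each $3\le n\le 1000$; there is no proof in the paper for you to match, so your argument must stand on its own. It does not, because it is a research program rather than a proof. Your framing is sensible and partly mirrors the paper's own remarks: the openness/perturbation step (verify simplicity of $\hat r$ and ($\star$) at or near the degenerate base point \eqref{param}, then perturb to achieve $\al_{i,j}\ne\beta_{i,j}$ so Lemma \ref{1.6} and Proposition \ref{prop:orth} apply) is exactly the paper's reasoning, and the splitting $D_n^{\alpha,\beta}(r)=T_n(r)+R$ with $R$ supported on the first and last rows and columns is correct. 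But the entire mathematical content of the conjecture is concentrated in the step you explicitly defer: simplicity of $\hat r_n$, the nonvanishing $w_i\ne 0$, and above all the uniform-in-$n$ half-plane bound on the phases demanded by ($\star$) and Lemma \ref{check} are announced (``proves it simple by a sign/monotonicity argument'', ``once such a bound is secured for all $n\ge N_0$'') but never established. Naming the crux is not overcoming it; after your argument the statement is exactly as open as before.

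Moreover, the analytic setup you describe contains concrete errors that would derail the program as written. (a) The claim that $\hat r_n$ is increasing and bounded, hence convergent to some $r_\infty>1$, is false: $R$ has low rank but not bounded norm --- the first and last rows of $D_n^{\alpha,\beta}$ are \emph{dense} with unimodular (and a few doubled) entries, so $\|R\|\asymp\sqrt n$, and since $D_n^{\alpha,\beta}(r)=D_n^{\alpha,\beta}(0)+rI$ gives $\hat r_n=-\lambda_{\min}\bigl(D_n^{\alpha,\beta}(0)\bigr)$, one gets $\hat r_n\asymp\sqrt n\to\infty$; all transfer-matrix asymptotics must be carried out in this growing regime. (b) The bulk is pentadiagonal, so the homogeneous recurrence is fourth order with four symbol roots in reciprocal pairs (via $\mu=\lambda+\lambda^{-1}$ one gets $\mu^2+\mu+r-2=0$), not two roots $\lambda_\pm$; for $r>9/4$ these are complex, so every geometric mode carries a nontrivial winding phase. (c) Most importantly, the kernel equation in the bulk is \emph{inhomogeneous}: since at \eqref{param} the dense first and last columns are constant in the interior, the interior rows read $w_{i-2}+w_{i-1}+\hat r_n w_i+w_{i+1}+w_{i+2}=c$ with $c$ a fixed combination of $w_1$ and $w_n$. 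The bulk profile of $w$ is therefore a constant particular solution plus exponentially decaying oscillatory corrections, and it is this constant term --- not ``a single dominant geometric mode'' --- that can keep the relevant phases inside an open half-plane; indeed by your own (correct) observation, a dominant geometric mode with winding phase would violate Lemma \ref{check}, contradicting the numerics. Repaired along these lines your outline becomes a plausible attack, with the remaining task being uniform estimates showing the decaying corrections never push $w_i\bar\alpha_i$ (or $w_i$) across the half-plane boundary for any of the index windows in Proposition \ref{prop:orth}; but as submitted, the proposal does not prove the conjecture.
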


\section{Discussion}

In the previous section, we had to rely on computer computations because 
it is difficult to find an explicit formula for a kernel vector of $\varrho$. On the other hand, in \cite{kye-osaka}, the authors provide a construction of a $3\otimes 3$ edge state in which a kernel vector of $\varrho$ is fixed from the beginning. We could generalize this construction to the $4\otimes 4$ case, as we will explain briefly below.

We begin with
$$
A=
\left(\begin{matrix}
\alpha+\bar\alpha&-\alpha &-\bar\alpha & 0\\
-\bar\alpha&\alpha+\bar\alpha &0 &-\alpha\\
-\alpha &0&\alpha+\bar\alpha&-\bar\alpha\\
0&-\bar\alpha&-\alpha&\alpha+\bar\alpha
\end{matrix}\right),
$$
with $|\alpha|=1$. If $-\frac\pi 4<\mathrm{Arg} \alpha<\frac \pi 4$, then it is easy to check that $A$ is positive
of corank one with the kernel spanned by the vector $(1,1,1,1)^\ttt\in\mathbb C^4$.
We also consider
$$
B=
r\left(\begin{matrix}
\frac{1}{p}&0 &-\bar\alpha & 0\\
0&\frac{1}{p} &0 &-\bar\alpha\\
-\alpha &0&p&0\\
0&-\alpha&0&p
\end{matrix}\right)
+
r\left(\begin{matrix}
1&-1 &0 & 0\\
-1&1 &0 &0\\
0&0&1&-1\\
0&0&-1&1
\end{matrix}\right),
$$
where $p>0$ and $0<r<1$. Then $B$ is positive of corank one with the kernel spanned by the vector $(p,p,\alpha,\alpha)^\ttt\in\mathbb C^4$.

Now, we define the matrix $\varrho\in M_4(M_4)=M_4\otimes
M_4$ as follows:
\begin{enumerate}
\item
The $(e_{11}, e_{22}, e_{33}, e_{44})$-principal submatrix of $\varrho$ is given by $A$,
\item
The $(e_{14}, e_{23}, e_{32}, e_{41})$-principal submatrix of $\varrho^\Gamma$ is given by $B$,
\item
The diagonals of $\varrho$ are given by $\frac{1}{p}$ in the places $e_{12}, e_{24}, e_{31}, e_{43}$,
\item
The diagonals of $\varrho$ are given by $p$ in the places $e_{21}, e_{42}, e_{13}, e_{34}$.
\item
All the other entries are zero.
\end{enumerate}

Namely, $\rhog$ is given by
$$
\begin{scriptsize}
\left(\begin{array}{cccc|cccc|cccc|cccc}
  \al+\bal& \cdot& \cdot& \cdot& \cdot& \cdot& \cdot& \cdot& \cdot& \cdot& \cdot& \cdot& \cdot& \cdot& \cdot& \cdot\\
  \cdot& \frac{1}{p}& \cdot& \cdot& -\bal& \cdot& \cdot& \cdot& \cdot& \cdot& \cdot& \cdot& \cdot& \cdot& \cdot& \cdot\\
  \cdot& \cdot& p& \cdot& \cdot& \cdot& \cdot& \cdot& -\al& \cdot& \cdot& \cdot& \cdot& \cdot& \cdot& \cdot\\
  \cdot& \cdot& \cdot& \frac{r}{p}+r& \cdot& \cdot& -r& \cdot& \cdot& -r\bal& \cdot& \cdot& \cdot& \cdot& \cdot& \cdot\\
\hline
  \cdot& -\al& \cdot& \cdot& p& \cdot& \cdot& \cdot& \cdot& \cdot& \cdot& \cdot& \cdot& \cdot& \cdot& \cdot\\
  \cdot& \cdot& \cdot& \cdot& \cdot&  \al+\bal& \cdot& \cdot& \cdot& \cdot& \cdot& \cdot& \cdot& \cdot& \cdot& \cdot\\
  \cdot& \cdot& \cdot& -r& \cdot& \cdot& \frac{r}{p}+r& \cdot& \cdot& \cdot& \cdot& \cdot& -r\bal& \cdot& \cdot& \cdot\\
  \cdot& \cdot& \cdot& \cdot& \cdot& \cdot& \cdot& \frac{1}{p}& \cdot& \cdot& \cdot& \cdot& \cdot& -\bal& \cdot& \cdot\\
\hline
  \cdot& \cdot& -\bal& \cdot& \cdot& \cdot& \cdot& \cdot& \frac{1}{p}& \cdot& \cdot& \cdot& \cdot& \cdot& \cdot& \cdot\\
  \cdot& \cdot& \cdot& -r\al& \cdot& \cdot& \cdot& \cdot& \cdot& rp+r& \cdot& \cdot& -r& \cdot& \cdot& \cdot\\
  \cdot& \cdot& \cdot& \cdot& \cdot& \cdot& \cdot& \cdot& \cdot& \cdot&  \al+\bal& \cdot& \cdot& \cdot& \cdot& \cdot\\
  \cdot& \cdot& \cdot& \cdot& \cdot& \cdot& \cdot& \cdot& \cdot& \cdot& \cdot& p& \cdot& \cdot& -\al& \cdot\\
\hline
  \cdot& \cdot& \cdot& \cdot& \cdot& \cdot& -r\al& \cdot& \cdot& -r& \cdot& \cdot& rp+r& \cdot& \cdot& \cdot\\
  \cdot& \cdot& \cdot& \cdot& \cdot& \cdot& \cdot& -\al& \cdot& \cdot& \cdot& \cdot& \cdot& p& \cdot& \cdot\\
  \cdot& \cdot& \cdot& \cdot& \cdot& \cdot& \cdot& \cdot& \cdot& \cdot& \cdot& -\bal& \cdot& \cdot& \frac{1}{p}& \cdot\\
  \cdot& \cdot& \cdot& \cdot& \cdot& \cdot& \cdot& \cdot& \cdot& \cdot& \cdot& \cdot& \cdot& \cdot& \cdot&  \al+\bal
\end{array}\right),
\end{scriptsize}
$$
whose partial transpose $\varrho$ is given by
$$
\begin{scriptsize}
\left(\begin{array}{cccc|cccc|cccc|cccc}
  \al+\bal& \cdot& \cdot& \cdot& \cdot& -\al& \cdot& \cdot& \cdot& \cdot& -\bal& \cdot& \cdot& \cdot& \cdot& \cdot\\
  \cdot& \frac{1}{p}& \cdot& \cdot& \cdot& \cdot& \cdot& \cdot& \cdot& \cdot& \cdot& -r\al& \cdot& \cdot& \cdot& \cdot\\
  \cdot& \cdot& p& \cdot& \cdot& \cdot& \cdot& -r& \cdot& \cdot& \cdot& \cdot& \cdot& \cdot& \cdot& \cdot\\
  \cdot& \cdot& \cdot& \frac{r}{p}+r& \cdot& \cdot& \cdot& \cdot& \cdot& \cdot& \cdot& \cdot& \cdot& \cdot& \cdot& \cdot\\
\hline
  \cdot& \cdot& \cdot& \cdot& p& \cdot& \cdot& \cdot& \cdot& \cdot& \cdot& \cdot& \cdot& \cdot& -r\al& \cdot\\
  -\bal& \cdot& \cdot& \cdot& \cdot& \al+\bal& \cdot& \cdot& \cdot& \cdot& \cdot& \cdot& \cdot& \cdot& \cdot& -\al\\
  \cdot& \cdot& \cdot& \cdot& \cdot& \cdot& \frac{r}{p}+r& \cdot& \cdot& \cdot& \cdot& \cdot& \cdot& \cdot& \cdot& \cdot\\
  \cdot& \cdot& -r& \cdot& \cdot& \cdot& \cdot&\frac{1}{p}& \cdot& \cdot& \cdot& \cdot& \cdot& \cdot& \cdot& \cdot\\
\hline
  \cdot& \cdot& \cdot& \cdot& \cdot& \cdot& \cdot& \cdot& \frac{1}{p}& \cdot& \cdot& \cdot& \cdot& -r& \cdot& \cdot\\
  \cdot& \cdot& \cdot& \cdot& \cdot& \cdot& \cdot& \cdot& \cdot& rp+r& \cdot& \cdot& \cdot& \cdot& \cdot& \cdot\\
  -\al& \cdot& \cdot& \cdot& \cdot& \cdot& \cdot& \cdot& \cdot& \cdot& \al+\bal& \cdot& \cdot& \cdot& \cdot& -\bal\\
  \cdot& -r\bal& \cdot& \cdot& \cdot& \cdot& \cdot& \cdot& \cdot& \cdot& \cdot& p& \cdot& \cdot& \cdot& \cdot\\
\hline
  \cdot& \cdot& \cdot& \cdot& \cdot& \cdot& \cdot& \cdot& \cdot& \cdot& \cdot& \cdot& rp+r& \cdot& \cdot& \cdot\\
  \cdot& \cdot& \cdot& \cdot& \cdot& \cdot& \cdot& \cdot& -r& \cdot& \cdot& \cdot& \cdot& p& \cdot& \cdot\\
  \cdot& \cdot& \cdot& \cdot& -r\bal& \cdot& \cdot& \cdot& \cdot& \cdot& \cdot& \cdot& \cdot& \cdot& \frac{1}{p}& \cdot\\
  \cdot& \cdot& \cdot& \cdot& \cdot& -\bal& \cdot& \cdot& \cdot& \cdot& -\al& \cdot& \cdot& \cdot& \cdot& \al+\bal
\end{array}\right).\end{scriptsize}
$$

It is straightforward to check that both of $\varrho$ and $\rhog$ are positive when $p>0$ and $0<r<1$ and $\al$ is a complex number with $|\al|=1$ and $-\frac\pi 4<\mathrm{Arg} \alpha<\frac \pi 4$.
We also see that $\varrho$ has the corank one with the kernel spanned by the vector
$e_{11}+e_{22}+e_{33}+e_{44}$. On the other hand, the partial transpose $\varrho^\Gamma$
has the corank $5$ with the kernel spanned by the vectors
$p e_{12} +\alpha e_{21}$,
$p e_{24} +\alpha e_{42}$,
$p e_{31} +\alpha e_{13}$,
$p e_{43} +\alpha e_{34}$ and
$p e_{14} + p e_{23} +\alpha e_{32}+\alpha e_{41}$,
from which one may check easily that $\varrho$ is a PPT entangled edge state.
It would be very nice if this method works for any $n\ge 5$ to get PPT entangled edge states with exact formulae.

\section{Acknowledgment}
JC was supported by Korea NRF grant 2018R1C1B6005600. YHK was partially supported by Korea NRF grants 2017R1E1A1A03070694 and 2017R1A5A1015626.
SHK was partially supported by Korea NRF grant 2017R1A2B4006655.

%%%%%%%%%%%%%%%%%%%%%%%%%%%%%%%%%%%%%%%%%%%%%%%%%%%%%%%%%%%%%%%%%%%%%%%%%%%%%%%%%%%%%%%%%%%%%%%%
%%%%%%%%%%%%%%%%%%%%%%%%%%%%%%%%%%%%%%%%%%%%%%%%%%%%%%%%%%%%%%%%%%%%%%%%%%%%%%%%%%%%%%%%%%%%%%%%
%%%%%%%%%%%%%%%%%%%%%%%%%%%%%%%%%%%%%%%%%%%%%%%%%%%%%%%%%%%%%%%%%%%%%%%%%%%%%%%%%%%%%%%%%%%%%%%%
%%%%%%%%%%%%%%%%%%%%%%%%%%%%%%%%%%%%%%%%%%%%%%%%%%%%%%%%%%%%%%%%%%%%%%%%%%%%%%%%%%%%%%%%%%%%%%%%
%%%%%%%%%%%%%%%%%%%%%%%%%%%%%%%%%%%%%%%%%%%%%%%%%%%%%%%%%%%%%%%%%%%%%%%%%%%%%%%%%%%%%%%%%%%%%%%%
%%%%%%%%%%%%%%%%%%%%%%%%%%%%%%%%%%%%%%%%%%%%%%%%%%%%%%%%%%%%%%%%%%%%%%%%%%%%%%%%%%%%%%%%%%%%%%%%


\begin{thebibliography}{99}

\bibitem{bdmsst}
C. H. Bennett, D. P. DiVincenzo, T. Mor, P. W. Shor, J. A. Smolin
and B. M. Terhal, \it Unextendible product bases and bound
entanglement, \rm Phys. Rev. Lett. \bf 82 \rm (1999), 5385--5388.

\bibitem{choi-ppt}
M.-D. Choi,
\it Positive linear maps,
\rm Operator Algebras and Applications (Kingston, 1980), pp. 583--590,
Proc. Sympos. Pure Math. Vol 38. Part 2, Amer. Math. Soc., 1982.


\bibitem{clarisse}
L. Clarisse,
\it Construction of bound entangled edge states with special ranks,
\rm Phys. Lett. A \bf 359 \rm (2006), 603--607.

\bibitem{dmsst}
D. P. DiVincenzo, T. Mor, P. W. Shor, J. A. Smolin and B. M. Terhal,
\it Unextendible product bases, uncompletable product bases and bound entanglement,
\rm Commun. Math. Phys. \bf 238,  \rm (2003), 379--410.

\bibitem{StrongNP}
S. Gharibian,
\it Strong NP-hardness of the quantum separability problem,
\rm Quantum Inf. Comput. \bf 10 \rm (2010), 343--360.

\bibitem{NPHard03}
L. Gurvits,
\it Classical deterministic complexity of Edmonds' Problem and quantum entanglement,
\rm Proceedings of the Thirty-Fifth Annual ACM Symposium on Theory of Computing, STOC '03, ACM, 10--19.

\bibitem{ha-3}
K.-C. Ha,
\it Comment on : \lq\lq Construction of bound entangled edge
states with special ranks\rq\rq\ \rm[Phys. Lett. A 359 (2006) 603],
\rm Phys. Lett. A \bf 361 \rm (2007) 515--519.

\bibitem{ha-kye-2}
K.-C. Ha and S.-H. Kye,
\it Construction of $3 \otimes 3$ entangled edge states with positive partial transposes,
\rm J. Phys. A \bf 38 \rm (2005), 9039--9050.

\bibitem{halder}
S. Halder, M. Banik and S. Ghosh,
\it Family of bound entangled states on the boundary of the Peres set,
\rm Phys. Rev. A 99, 062329 (2019). 
%arXiv:1801.00405.

\bibitem{HeoK}
J. Heo and Y.-H. Kiem,
\it On characterizing integral zeros of Krawtchouk polynomials by quantum entanglement,
\rm Linear Algebra Appl. \bf 567 \rm (2019), 167--179.

\bibitem{horo-1}
M. Horodecki, P. Horodecki and R. Horodecki,
\it Separability of mixed states: necessary and sufficient conditions,
\rm Phys. Lett. A \bf 223 \rm (1996), 1--8.

\bibitem{horo-distill}
M. Horodecki, P. Horodecki and R. Horodecki,
\it Mixed-state entanglement and distillation: is there a \lq\lq bound\rq\rq\ entanglement in nature?
\rm Phys. Rev. Lett {\bf 80} (1998), 5239--5242.

\bibitem{p-horo-range}
P. Horodecki,
\it Separability criterion and inseparable mixed states with positive partial transposition,
\rm Phys. Lett. A \bf 232 \rm (1997),  \rm 333--339.


\bibitem{huber}
M. Huber, L. Lami, C. Lancien and A. M\" uller-Hermes,
\rm High-dimensional entanglement in states with positive partial transposition,
\rm Phys. Rev. Lett. {\bf 121} (2018), 200503.


\bibitem{kye-prod-vec}
Y.-H. Kiem, S.-H. Kye and J. Lee,
\it Existence of product vectors and their partial conjugates in a pair of spaces,
\rm J. Math. Phys. \bf 52 \rm (2011), 122201.

\bibitem{kiem-multi}
Y.-H. Kiem, S.-H. Kye and J. Na,
\it Product vectors in the ranges of multi-partite states with positive partial transposes and permanents of matrices,
\rm Commun. Math. Phys. {\bf 338} (2015), 621--639.

\bibitem{kye-ritsu}
S.-H. Kye,
\it Facial structures for various notions of positivity and applications to the theory of entanglement,
\rm Rev. Math. Phys. {\bf 25} (2013), 1330002.

\bibitem{kye-osaka}
S.-H. Kye and H. Osaka,
\it Classification of bi-qutrit positive partial transpose entangled edge states by their ranks,
\rm J. Math. Phys. {\bf 53} (2012), 052201.


\bibitem{lkhc}
M. Lewenstein, B. Kraus, P. Horodecki and J. Cirac,
\it Characterization of separable states and entanglement witnesses,
\rm Phys. Rev. A \bf 63 \rm (2001), 044304.

\bibitem{MWS}
F. J. MacWilliams and N. J. A. Sloane,
\rm The Theory of Error Correcting Codes,
\rm North-Holland mathematical library, Vol. 16, North-Holland, 1977.

\bibitem{peres}
A. Peres,
\it Separability Criterion for Density Matrices,
\rm Phys. Rev. Lett. \bf 77 \rm (1996), 1413--1415.

\bibitem{stormer}
E. St\o rmer,
\it Positive linear maps of operator algebras,
\rm Acta Math. \bf 110 \rm (1963), 233--278.

\bibitem{stormer82}
E. St\o rmer,
\it Decomposable positive maps on $C^*$-algebras,
\rm Proc. Amer. Math. Soc. \bf 86 \rm (1982), 402--404.

\bibitem{vint}
J. H. van Vint,
\rm Introduction to Coding Theory, 3/e,
\rm Graduate Texts Math. Vol. 86, Springer-Verlag, 1992.

\bibitem{woronowicz}
S. L. Woronowicz,
\it Positive maps of low dimensional matrix algebras,
\rm Rep. Math. Phys. \bf 10 \rm (1976), 165--183.


\end{thebibliography}
\end{document}